\newtheorem{lemma}{Lemma}
\newtheorem{theorem}{Theorem}
 \newtheorem{definition}{Definition}
 \def\myendproof{{\ \vbox{\hrule\hbox{%
   \vrule height1.3ex\hskip0.8ex\vrule}\hrule }}\par}
 \newenvironment{proofAppendix}[1]{\noindent{\bf Proof of #1. }}{\myendproof}
\newcommand{\ceiling}[1]{\left\lceil{#1}\right\rceil}
\newcommand{\setof}[1]{\left\{{#1}\right\}}
\newcommand{\set}[2]{\left\{#1\mid #2\right\}}
\newcommand{\tsocs}[0]{\mbox{\emph{TS-OCS}}}
\newcommand{\litmus}[0]{$\text{LITMUS}^{\text{RT}}$}
\newcolumntype{L}[1]{>{\raggedright\let\newline\\\arraybackslash\hspace{0pt}}m{#1}}
\newcolumntype{C}[1]{>{\centering\let\newline\\\arraybackslash\hspace{0pt}}m{#1}}
\newcolumntype{R}[1]{>{\raggedleft\let\newline\\\arraybackslash\hspace{0pt}}m{#1}}
\tikzset{
    task/.style={shade, shading=radial, rectangle,minimum height=.1cm,
        inner color=#1!20, outer color=#1!60!gray},
    task1/.style={task=yellow, minimum width=13mm},
    task2/.style={task=orange, minimum width=13mm},
    task3/.style={task=red, minimum width=13mm},
    task4/.style={task=green, minimum width=13mm},
    task5/.style={task=blue, minimum width=13mm},
    task6/.style={task=purple, minimum width=13mm},
    task7/.style={task=cyan, minimum width=13mm},
    task8/.style={task=pink, minimum width=13mm},
}
\newcommand{\algorithmicinput}{\textbf{Input:}}
\newcommand{\INPUT}{\item[\algorithmicinput]}
\begin{document}

\title{Dependency Graph Approach for Multiprocessor Real-Time Synchronization}

\author{Jian-Jia Chen, Georg von der Br\"uggen, Junjie Shi and Niklas Ueter\\
TU Dortmund University, Germany}

\maketitle

\begin{abstract}
  Over the years, many multiprocessor locking protocols have been
  designed and analyzed. However, the performance of these protocols
  highly depends on how the tasks are partitioned and prioritized and
  how the resources are shared locally and globally.  This paper
  answers a few fundamental questions when real-time tasks share
  resources in multiprocessor systems. We
  explore the fundamental difficulty of the multiprocessor
  synchronization problem and show that a very simplified version of
  this problem is ${\mathcal NP}$-hard in the strong sense regardless
  of the number of processors and the underlying scheduling paradigm.
  Therefore, the allowance of preemption or migration does not reduce
  the computational complexity.  For the positive side, we develop a
  dependency-graph approach, that is specifically useful for
  frame-based real-time tasks, in which all tasks have the same period
  and release their jobs always at the same time. We present a series
  of algorithms with speedup factors between $2$ and $3$ under
  semi-partitioned scheduling. We further explore methodologies and
  tradeoffs of preemptive against non-preemptive scheduling algorithms
  and partitioned against semi-partitioned scheduling algorithms. The
  approach is extended to periodic tasks under certain conditions.
\end{abstract}

\section{Introduction}
\label{sec:intro}

In a multi-tasking system, mutual exclusion for the accesses to shared
resources, e.g., data structures, files, etc., has to be guaranteed to
ensure the correctness of these operations. Such accesses to shared
resources are typically done within the so-called \emph{critical
  sections}, which can be protected by using \emph{binary semaphores}
or \emph{mutex locks}. Therefore,  at any point in time no two
task instances are in their critical sections that access the same shared
recourse. Moreover, advanced embedded computing systems heavily interact
with the physical world, and \emph{timeliness} of computation is an
essential requirement of correctness. To ensure safe operations of
such embedded systems, the satisfaction of the real-time requirements,
i.e., worst-case timeliness, needs to be verified.

If aborting or restarting a critical section is not allowed, due to
mutual exclusion, a higher-priority job may have to be stopped until a
lower-priority job unlocks the requested shared resource that was
already locked earlier, a so-called priority inversion. The study of
mutual exclusion in uniprocessor real-time systems can be traced back
to the priority inheritance protocol (PIP) and priority ceiling
protocol (PCP) by Sha et al.~\cite{DBLP:journals/tc/ShaRL90} in 1990
and the stack resource policy (SRP) by
Baker~\cite{DBLP:journals/rts/Baker91} in 1991. The Immediate PCP, a
variant of the PCP, has been implemented in Ada (called Ceiling
locking) and POSIX (called Priority Protect Protocol).

To schedule real-time tasks on multiprocessor
platforms, there have been three widely adopted paradigms:
partitioned, global, and semi-partitioned scheduling. The
\emph{partitioned} scheduling approach partitions the tasks statically
among the available processors, i.e., a task is always executed on
the assigned processor.  The \emph{global} scheduling approach allows
a task to migrate from one processor to another at any time. The
\emph{semi-partitioned} scheduling approach decides whether a task is
divided into subtasks statically and how each task/subtask is then
assigned to a processor.  A comprehensive survey of multiprocessor
scheduling in real-time systems can be found in
\cite{DBLP:journals/csur/DavisB11}.

The design of
synchronization protocols for real-time tasks on multiprocessor platforms
 started with the distributed priority ceiling protocol
(DPCP)~\cite{DBLP:conf/rtss/RajkumarSL88}, followed by the multiprocessor priority ceiling protocol
(MPCP)~\cite{Rajkumar_1990}.\footnote{Neither of these
  two protocols had a concrete name in the original papers. In the
  literature, most authors referred to the protocols
  in~\cite{DBLP:conf/rtss/RajkumarSL88} as DPCP and
  \cite{Rajkumar_1990} as MPCP, respectively. } 
The MPCP is based on partitioned fixed-priority scheduling and adopts the PCP for local
resources. When requesting global resources that are shared by several
tasks on different processors, the MPCP executes the
corresponding critical sections with priority boosting. 
By contrast, under the DPCP, the sporadic/periodic
real-time tasks are scheduled based on partitioned fixed-priority
scheduling, except when accessing resources that are bound to a
different processor. That is, the DPCP is semi-partitioned scheduling that
allows migration at the boundary of critical and non-critical sections.

Over the years, many locking protocols have been designed and
analyzed, including the multiprocessor stack resource policy (MSRP)
\cite{DBLP:conf/rtss/GaiLN01}, the flexible multiprocessor locking
protocol (FMLP) \cite{block-2007}, the multiprocessor
PIP \cite{DBLP:conf/rtss/EaswaranA09}, the $O(m)$ locking protocol
(OMLP) \cite{DBLP:conf/rtss/BrandenburgA10}, the Multiprocessor
Bandwidth Inheritance (M-BWI) \cite{DBLP:conf/ecrts/FaggioliLC10},
gEDF-vpr \cite{DBLP:journals/rts/AnderssonE10}, LP-EE-vpr
\cite{DBLP:journals/rts/AnderssonR14}, and the Multiprocessor resource
sharing Protocol (MrsP) \cite{DBLP:conf/ecrts/BurnsW13}.  Also,
several protocols for hybrid scheduling approaches such as
clustered scheduling~\cite{DBLP:conf/ecrts/Brandenburg14},
reservation-based scheduling~\cite{DBLP:conf/ecrts/FaggioliLC10}, and
open real-time systems~\cite{DBLP:conf/ecrts/NematiBN11} have been
proposed in recent years.  To support nested critical sections, Ward
and Anderson~\cite{DBLP:conf/ecrts/WardA12,DBLP:conf/rtns/WardA13}
introduced the Real-time Nested Locking Protocol
(RNLP)~\cite{DBLP:conf/ecrts/WardA12}, which adds supports for
fine-grained nested locking on top of non-nested protocols.

However, the performance of these protocols highly depends on 1) how
the tasks are partitioned and prioritized, 2) how the resources are
shared locally and globally, and 3) whether a job/task being blocked
should spin or suspend itself.

Regarding task partitioning, Lakshmanan et
al.~\cite{DBLP:conf/rtss/LakshmananNR09} presented a
synchronization-aware partitioned heuristic for the MPCP, which
organizes the tasks that share common resources into groups and attempts to
assign each group of tasks to the same processor.  Following the same
principle, Nemati et al.~\cite{DBLP:conf/opodis/NematiNB10} presented
a blocking-aware partitioning method that uses an advanced cost
heuristic algorithm
to split a task group when the entire group fails to be assigned on
one processor.  In subsequent work, Hsiu et
al.~\cite{DBLP:conf/emsoft/HsiuLK11} proposed a dedicated-core
framework to separate the execution of critical sections and normal
sections, and employed a priority-based mechanism for resource
sharing, such that each request can be blocked by at most one
lower-priority request.  Wieder and
Brandenburg~\cite{DBLP:conf/sies/WiederB13} proposed a greedy slacker
partitioning heuristic in the presence of spin locks.  The
\textit{resource-oriented partitioned} (ROP) scheduling was proposed
by Huang et. al~\cite{RTSS2016-resource} in 2016 and later refined by
von der Br\"uggen et al.~\cite{RTNS17-resource} with release
enforcement for a special case. 

For priority assignment, most of the results in the literature use
rate-monotonic (RM) or earliest-deadline-first (EDF) scheduling. To the best of our knowledge, the
priority assignment for systems with shared resources has 
only been seriously explored in a small numbers of papers,
e.g., relative deadline assignment under release enforcement
in \cite{RTNS17-resource}, priority assignment for spinning
\cite{DBLP:conf/rtcsa/AfsharBBN17}, reasonable priority assignments
under global scheduling \cite{DBLP:conf/rtss/EaswaranA09}, and the
optimal priority assignment used in the greedy slack algorithm in
\cite{DBLP:conf/sies/WiederB13}.  However, no theoretical evidence has
been provided to quantify the non-optimality of the above heuristics.

Although many multiprocessor locking protocols have been proposed in
the literature, there are a few unsolved fundamental questions when
real-time tasks share resources (via locking mechanisms) in
multiprocessor systems: 
\begin{itemize}
\item \emph{What is the fundamental difficulty?}
\item \emph{What is the performance gap of partitioned,
    semi-partitioned, and global scheduling?}
\item \emph{Is it always beneficial to prioritize critical sections?}
\end{itemize}

To answer the above questions, we focus on the  simplest and
the most basic 
setting: 
all tasks have the same period and release their jobs
always at the same time, so-called \emph{frame-based real-time task
  systems}, and are scheduled on $M$ identical (homogeneous)
processors. Specifically, we assume that each critical section is
non-nested and is guarded by only one binary semaphore or one mutex
lock.

\noindent\textbf{Contribution:}
Our contributions are as follows:
\begin{compactitem}
\item We show that finding a schedule of the tasks to meet the given
  common deadline is ${\cal NP}$-hard in the strong sense
  \emph{regardless of the number of processors $M$ in the
    system}. Therefore, there is no polynomial-time approximation
  algorithm that can bound the allocated number of processors to meet
  the given deadline. Moreover, the ${\cal NP}$-hardness holds under
  any scheduling paradigm.  Therefore, the allowance of preemption or
  migration does not reduce the computational complexity.
\item We propose a dependency graph approach for multiprocessor
  synchronization, which consists of two steps: 1) the construction of
  a directed acyclic graph (DAG), and 2) the scheduling of this DAG. We
  prove that for minimizing the makespan the lower bound of the approximation
  ratio of such an approach  is at least
  $2-\frac{2}{M}+\frac{1}{M^2}$ under any scheduling paradigm and
  $2-\frac{1}{M}$ under partitioned or semi-partitioned scheduling.
\item We demonstrate how existing results in the literature of
  uniprocessor non-preemptive scheduling  can be adopted to construct the DAG in
  the first step of the dependency graph approach when each
  task has only one critical section. This results in several
  polynomial-time scheduling algorithms with different constant
  approximation bounds for minimizing the makespan. Specifically, the
  best approximation developed is a polynomial-time
  approximation scheme with an approximation ratio of
  $2+\epsilon-\frac{1+\epsilon}{M}$ for any $\epsilon > 0$ under
  semi-partitioned scheduling strategies. We further discuss
  methodologies and tradeoffs of preemptive against non-preemptive
  scheduling algorithms 
  and partitioned against semi-partitioned
  scheduling algorithms.
\item We also implemented the dependency graph approach as a prototype
  in \litmus~\cite{calandrino2006litmus,bbb-2011}. The experimental
  results show that the overhead is almost the same as the
  state-of-the-art multiprocessor locking protocols. Moreover, we also
  provide extensive numerical evaluations, which demonstrate the
  performance of the proposed approach under different scheduling
  constraints. Comparing to the state-of-the-art resource-oriented
  partitioned (ROP) scheduling, our approach shows
  significant improvement.
\end{compactitem}

\section{System Model}

\subsection{Task Model}
In this paper, we will implicitly consider \emph{frame-based real-time
  task systems} to be scheduled on $M$ identical (homogeneous)
processors. The given tasks release their jobs at the same time and
have the same period and relative deadline.  Our studied problem is the task synchronization problem where all
tasks have exactly one (not nested) critical section, denoted as \tsocs.
Specifically, each task $\tau_i$ releases a job (at time $0$ for notational brevity) with the
following properties:
\begin{itemize}
\item $C_{i,1}$ is the execution time of the first non-critical
  section of the job.  
\item $A_{i,1}$ is the execution time of the (first) critical section of
  the job, in which a binary semaphore or a mutex $\sigma(\tau_{i,1})$
  is used to control the access of the critical section.
\item $C_{i,2}$ is the execution time of the second non-critical
  section of the job.  
\end{itemize}
A subjob is a critical section or a non-critical section. Therefore,
there are three subjobs of a job of task $\tau_i$.
We assume the task set $\textbf{T}$ is given and that the deadline
is either implicit, i.e., identical to the period, or constrained, i.e., smaller
than the period. 
The cardinality of a set $\textbf{X}$ is $|\textbf{X}|$.
We
also make the following assumptions:
\begin{itemize}
\item 
For each task $\tau_i$ in $\textbf{T}$, $C_{i,1} \geq 0$, $C_{i,2} \geq 0$, and $A_{i,1} \geq
0$.
\item The execution of the critical sections guarded by one binary semaphore $s$ must be sequentially
executed under a total order.  That is, if two tasks share the same
semaphore, their critical sections must be executed one after another
without any interleaving.
\item The execution of a job cannot be parallelized, i.e., a
  job must be sequentially executed in the order of $C_{i,1}, A_{i,1},
  C_{i,2}$. 
\item There are in total $z$ binary semaphores. 
\end{itemize}
The paper will implicitly focus on the above task model. In
Section~\ref{sec:periodic-tasks}, we will explain how the algorithms
in this paper can be extended to periodic task systems under certain
conditions.

\subsection{Scheduling Strategies}
\label{sec:schedule-strategies}

Here, we define scheduling strategies and the properties of a schedule
for a frame-based real-time task system. Note that the terminologies
used here are 
limited to the scenario where each task in
$\textbf{T}$ releases \emph{only one} job at time $0$. Therefore, we
will use the term jobs and tasks 
interchangeable.   

A schedule is an assignment of the given jobs (tasks) to one of the
$M$ identical processors, such that each job is executed (not
necessarily consecutively) until completion.  A schedule for
$\textbf{T}$ can be defined as a function $\rho: {\mathbb R}\times M
\rightarrow \textbf{T}\cup\setof{\bot}$, where $\rho(t,m) = \tau_j$
denotes that the job of task $\tau_j$ is executed at time $t$ on
processor $m$, and $\rho(t, m) = \bot$ denotes that processor $m$ is
idle at time $t$.  We assume that a job has to be sequentially
executed, i.e., intra-task parallelism is not possible. Therefore, it is not
feasible to run a job in parallel on two processors, i.e., $\rho(t, m) \neq \rho(t, m')$ for any $m \neq
m'$ if $\rho(t, m) \neq \bot$.

Some other constraints may also be introduced. A schedule is
\emph{non-preemptive} if a job cannot be preempted by any other
job, i.e., there is only one interval with $\rho(t,m)~=~\tau_j$ on processor
$m$ for each task $\tau_j$ in $\textbf{T}$.  A schedule is
\emph{preemptive} \index{preemptive} if a job can be preempted,
i.e., more than one interval with $\rho(t,m)~=~\tau_j$ for any task
$\tau_j$ in $\textbf{T}$ on processor $m$ is allowed.

For a \emph{partitioned} schedule, a job has to be executed on one
processor. That is, there is exactly one processor $m$ with
$\rho(t,m)~=~\tau_j$ for every task $\tau_j$ in $\textbf{T}$.  Such a
schedule can be preemptive or non-preemptive.  For a \emph{global
  schedule}, a job can be arbitrarily executed on any of the $M$
processors at any time point. That is, it is possible that
$\rho(t,m)~=~\tau_j$ and $\rho(t', m')~=~\tau_j$ for $m\neq m'$ and $t
\neq t'$. By definition, a global schedule is preemptive (for
frame-based real-time task systems) in our model. 
For a
\emph{semi-partitioned} schedule, a subjob (either a critical section
or a non-critical section) has to be executed on one processor. Such a
semi-partitioned schedule can be preemptive or non-preemptive.

Based on the above definitions, a partitioned schedule is also a
semi-partitioned schedule, and a semi-partitioned schedule is also a
global schedule.

\subsection{Scheduling Theory}
\label{sec:scheduling-theory}
  In the rich literature of scheduling theory, one
specific objective is to minimize the completion time of the jobs,
called \textbf{makespan}. For frame-based real-time task systems,
if the makespan of the jobs released at time $0$ is no more than the
relative deadline,
then the task set can be feasibly scheduled to meet the
deadline.\footnote{Note that the deadline is never larger than the period in
our setting.} 
 We state the makespan problem for \tsocs\ that is studied here as follows:

\begin{definition}
  \label{def:problem-def}
  \textbf{The  \tsocs\ Makespan Problem:} 
  We are given $M$ identical (homogeneous) processors. There are $N$
  tasks arriving at time $0$. Each task is given by $\setof{C_{i,1},
    A_{i,1}, C_{i,2}}$ and has at most one critical section, guarded
  by one binary semaphore. The objective is to find a schedule that
  minimizes the makespan.
\end{definition}
Alternatively, we can also investigate the \textbf{bin packing}
version of the problem, i.e., minimizing the number of allocated
processors to meet a given common deadline $D$.

\begin{definition}
  \label{def:problem-bin-packing-def}
  \textbf{The \tsocs\  Bin Packing Problem:} 
  We are given identical (homogeneous)
  processors. There are $N$ tasks arriving at time $0$ with a common
  deadline $D$. Each task is given by $\setof{C_{i,1}, A_{i,1}, C_{i,2}}$
  and has at most one critical section, guarded by one binary
  semaphore. The objective is to find a schedule to meet the deadline
  with the minimum number of allocated processors.
\end{definition}

Essentially, the decision versions of the makespan and the bin packing
problems are identical: 
\begin{definition}
  \label{def:problem-decision-version-def}
  \textbf{The \tsocs\  Schedulability Problem:} 
  We are given $M$ identical (homogeneous)
  processors. There are $N$ tasks arriving at time $0$ with a common
  deadline $D$. Each task is given by $\setof{C_{i,1}, A_{i,1}, C_{i,2}}$
  and has at most one critical section, guarded by one binary
  semaphore. The objective is to find a schedule to meet the deadline
  by using the $M$ 
  processors.
\end{definition}

In the domain of scheduling theory, a scheduling problem is described
by a triplet $\mbox{Field}_1|\mbox{Field}_2|\mbox{Field}_3$.
\begin{itemize}
\item $\mbox{Field}_1$: describes the machine environment. 
\item $\mbox{Field}_2$: specifies the processing characteristics and
  constraints. 
\item $\mbox{Field}_3$: presents the objective to be optimized.
\end{itemize}
For example, the scheduling problem $1|r_j|L_{\max}$ deals with a
uniprocessor system, in which the input is a set of jobs with
different release times and different absolute deadlines, and the
objective is derive a non-preemptive schedule which minimizes the
maximum lateness.  The scheduling problem $P||C_{\max}$ deals with a
homogeneous multiprocessor system, in which the input is a set of jobs
with the same release times, and the objective is derive a
\emph{partitioned} schedule which minimizes the makespan.  The
scheduling problem $P|prec|C_{\max}$ is an extension of $P||C_{\max}$
by further considering the precedence constraints of the jobs.  The
scheduling problem $P|prec,prmp|C_{\max}$ further allows preemption. 
Note that in classical scheduling theory, preemption in parallel
machines implies the possibility of job migration from one machine to
another.\footnote{In real-time systems, this is
not necessarily the case. For instance, under preemptive partitioned scheduling
a job can be preempted and resumed later on the same processor without
migration.
} Therefore, the scheduling problem $P|prec,prmp|C_{\max}$ allows job preemption and migration, i.e., preemptive global scheduling.

\subsection{Approximation Metrics}
\label{sec:approximation-metrics}

Since many scheduling problems are ${\cal NP}$-hard in the strong
sense, polynomial-time approximation algorithms are often used. In the
realm of real-time systems, there are two widely adopted metrics: 

The \emph{Approximation Ratio} compares the resulting objectives of
(i) scheduling algorithm ${\cal A}$ and (ii) an optimal algorithm
when scheduling
any given task set.  Formally, an algorithm ${\cal A}$ for
the makespan problem (i.e., Definition~\ref{def:problem-def}) has an
approximation ratio $\alpha\geq 1$, if given any task set $\textbf{T}$, the
resulting makespan is at most $\alpha C_{\max}^*$ on $M$ processors,
where $C_{\max}^*$ is the minimum (optimal) makespan to schedule
$\textbf{T}$ on $M$ processors.  An algorithm ${\cal A}$ for the bin
packing problem (i.e., Definition~\ref{def:problem-bin-packing-def})
has an approximation ratio $\alpha\geq 1$, if given any task set
$\textbf{T}$, it can find a schedule of $\textbf{T}$ on $\alpha M^*$
processors to meet the common deadline, where $M^*$ is the minimum
(optimal) number of processors required to feasibly schedule $\textbf{T}$.

The \emph{Speedup Factor}~\cite{Kalyanasundaram:2000,Phillips:stoc97}
of a scheduling algorithm ${\cal A}$ indicates the factor $\alpha\geq1$ by
which the overall speed of a system would need to be increased so that
the scheduling algorithm ${\cal A}$ always derives a feasible schedule
to meet the deadline, provided that there exists one at the original
speed. This is used for the problem in 
Definition~\ref{def:problem-decision-version-def}.

We note that an algorithm that has an approximation ratio $\alpha$ for
the makespan problem in Definition~\ref{def:problem-def} also has a
speedup factor $\alpha$ for the schedulability problem in
Definition~\ref{def:problem-decision-version-def}. 

\section{Dependency Graph Approach for Multiprocessor Synchronization}

To handle the studied makespan problem in
Definition~\ref{def:problem-def}, we propose a \textbf{Dependency
  Graph Approach}, which involves two steps:
  \begin{itemize}
  \item In the first step, a directed graph $G = (V, E)$ is
    constructed. A subjob (i.e., a critical section or a non-critical
    section) is a vertex in $V$. The subjob $C_{i,1}$ is a
    predecessor of the subjob $A_{i,1}$. The subjob $A_{i,1}$ is
    a predecessor of the subjob $C_{i,2}$.  If two jobs of $\tau_i$
    and $\tau_j$ share the same binary semaphore, i.e.,
    $\sigma(\tau_{i,1}) = \sigma(\tau_{j,1})$, then either the subjob
    $A_{i,1}$ is the predecessor of that of $A_{j,1}$ or the subjob
    $A_{j,1}$ is the predecessor of that of $A_{i,1}$. All the
    critical sections guarded by a binary semaphore form a chain in
    $G$, i.e., the critical sections of the binary semaphore follow a
    total order. Therefore, we have the following properties in set $E$:
    \begin{itemize}
    \item The two directed edges $(C_{i,1}, A_{i,1})$ and $(A_{i,1},
      C_{i,2})$ are in $E$. 
    \item Suppose that $\textbf{T}_k$ is the set of the tasks which require
      the same binary semaphore $s_k$. Then, the $|\textbf{T}_k|$ tasks in
      $\textbf{T}_k$ follow a certain total order $\pi$ such that
      $(A_{i,1}, A_{j,1})$ is a directed edge in $E$ when $\pi(\tau_i) =
      \pi(\tau_j)-1$.
    \end{itemize} 
    Fig.~\ref{fig:example-dependency-graph} provides an example for
    a task dependency graph with one binary semaphore. Since there are
    $z$ binary semaphores in the task set, the task dependency graph
    $G$ has in total $z$ connected subgraphs, denoted as $G_1, G_2,
    \ldots, G_z$.
     In each connected subgraph $G_\ell$, the
    corresponding critical sections of the tasks that request critical
    sections guarded by the same semaphore form a chain and have to be
    executed sequentially. For example, in
    Fig.~\ref{fig:example-dependency-graph}, the dependency graph
    forces the scheduler to execute the critical section $A_{1,1}$
    prior to any of the other three critical sections.
  \item In the second step, a corresponding schedule of $G$ on $M$
    processors is generated. The schedule can be based on system's restrictions or user's preferences, i.e., either preemptive or
    non-preemptive schedules, either global, semi-partitioned, or partitioned
    schedules.
  \end{itemize}

  \begin{figure}[t]
    \centering
  \scalebox{0.8}{
      \begin{tikzpicture}[x=0.75cm,auto, thick]
    \node [draw,circle](C11)at(0,4){$C_{1,1}$};
    \node [draw,rectangle](A11)at(0,2){$A_{1,1}$};
   \node [draw,circle](C12)at(0,0){$C_{1,2}$};
    \node [draw,circle](C21)at(3,4){$C_{2,1}$};
    \node [draw,rectangle](A21)at(3,2){$A_{2,1}$};
   \node [draw,circle](C22)at(3,0){$C_{2,2}$};
    \node [draw,circle](C31)at(6,4){$C_{3,1}$};
    \node [draw,rectangle](A31)at(6,2){$A_{3,1}$};
   \node [draw,circle](C32)at(6,0){$C_{3,2}$};
  \node [draw,circle](C41)at(9,4){$C_{4,1}$};
    \node [draw,rectangle](A41)at(9,2){$A_{4,1}$};
   \node [draw,circle](C42)at(9,0){$C_{4,2}$};

    \draw[->] (C11) -- (A11);
    \draw[->] (A11) -- (C12);

    \draw[->] (C21) -- (A21);
    \draw[->] (A21) -- (C22);

    \draw[->] (C31) -- (A31);
    \draw[->] (A31) -- (C32);

    \draw[->] (C41) -- (A41);
    \draw[->] (A41) -- (C42);

    \draw[->,very thick] (A11) -- (A21);
    \draw[->,very thick] (A21) -- (A31);
    \draw[->,very thick] (A31) -- (A41);

  \end{tikzpicture}}     
    \caption{An example of a task dependency graph for a task set with
      one binary semaphore.}
    \label{fig:example-dependency-graph}
  \end{figure}
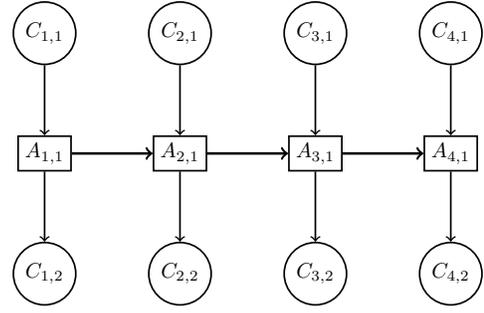

  In the dependency graph approach, the second step has been widely
  studied in scheduling theory. That is, a solution of the problem
  $P|prec|C_{\max}$ results in a semi-partitioned schedule, since the
  dependency graph is constructed by considering a critical section or
  a non-critical section as a subjob. Moreover, a solution of the
  problem $P|prec,prmp|C_{\max}$ results in a global schedule. For
  deriving a partitioned schedule, we can force the subjobs generated
  by a job to be \emph{tied} to one processor. That is, $P|prec,
  tied|C_{\max}$ targets a partitioned non-preemptive schedule and
  $P|prec, prmp, tied|C_{\max}$ targets a partitioned preemptive
  schedule.  

  Therefore, the key issue is the construction of the dependency
  graph. An alternative view of the dependency graph approach is to
   build the dependency graph assuming  
   a sufficient number of 
  processors (i.e., using as many processors as possible) in the first
  step, and then the second step considers the constraint of the
  number of processors. Towards the first step, we need the following
  definition:
  \begin{definition}
    A \textbf{critical path} of a task dependency graph $G$ is one of the
    longest paths of $G$. The critical path length of $G$ is denoted by $len(G)$
  \end{definition}

  For the rest of this paper, we denote a dependency task graph of the
  input task set $\textbf{T}$ that has the minimum critical path length as
  $G^*$. Note that $G^*$ is independent of $M$.
  \begin{lemma}
    \label{lemma:critical-path-lb}
    $len(G^*)$ is the lower bound of the \tsocs{} makespan problem for task set
    $\textbf{T}$ on $M$ processors.
  \end{lemma}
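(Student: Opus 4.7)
The plan is to show that every feasible schedule on $M$ processors induces a valid dependency graph whose critical-path length lower-bounds the makespan, and then to appeal to the minimality of $G^*$ to conclude.

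First, I would fix an arbitrary feasible schedule $\rho$ for $\textbf{T}$ on $M$ processors, with makespan $C_{\max}(\rho)$. For each binary semaphore $s_k$, the mutual-exclusion assumption forces the critical sections of the tasks in $\textbf{T}_k$ to be executed sequentially without interleaving in $\rho$, so $\rho$ induces a well-defined total order $\pi_k$ on the critical sections guarded by $s_k$. Combining these orders with the mandatory per-task edges $(C_{i,1},A_{i,1})$ and $(A_{i,1},C_{i,2})$ yields a dependency graph $G(\rho)$ that satisfies all the properties listed in the construction of Section~3, and is therefore a legal candidate for the minimization defining $G^*$.

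Next, I would argue that $C_{\max}(\rho) \geq len(G(\rho))$. Consider any directed path $P = v_1 \to v_2 \to \cdots \to v_p$ in $G(\rho)$. Each edge of $G(\rho)$ corresponds either to a within-job ordering (which $\rho$ must respect because a job executes sequentially in the order $C_{i,1}, A_{i,1}, C_{i,2}$) or to the mutual-exclusion ordering of critical sections on a shared semaphore (which $\rho$ must respect because two critical sections guarded by the same lock cannot overlap). Consequently, for each $j$, the finishing time of $v_j$ in $\rho$ is at least the finishing time of $v_{j-1}$ plus the processing time of $v_j$. A straightforward induction on $j$ then gives that the finishing time of $v_p$ is at least the sum of the processing times along $P$, hence at least $len(G(\rho))$. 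Taking the longest such path yields $C_{\max}(\rho) \geq len(G(\rho))$.

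Finally, since $G^*$ is defined as a dependency graph of $\textbf{T}$ minimizing the critical-path length, and $G(\rho)$ is a valid such graph, $len(G^*) \leq len(G(\rho)) \leq C_{\max}(\rho)$. As $\rho$ was arbitrary, this holds for the optimal schedule too, giving $len(G^*) \leq C_{\max}^*$. The main subtlety will be the precedence-along-a-path argument: I must be careful that it uses only edges that $\rho$ is actually obliged to respect, which is why I build $G(\rho)$ from the orderings already present in $\rho$ rather than from some externally chosen total order. The number-of-processors constraint $M$ does not enter this lower-bound argument at all, which is why the bound holds uniformly in $M$.
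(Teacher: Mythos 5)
Your proposal is correct and takes essentially the same approach the paper intends: the paper's own proof is a one-sentence appeal to the problem setting and the definition of $G^*$, and your argument (any feasible schedule induces a valid dependency graph whose critical path it must respect, then invoke minimality of $G^*$) is exactly the detailed version of that reasoning. No gaps; if anything, your write-up is more rigorous than the published proof.
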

  \begin{proof}
    This comes from the setting of the problem, i.e., each task
    $\tau_i$ has only one critical section guarded by one binary
    semaphore, and the definition of the graph $G^*$, i.e., using as
    many processors as possible.
  \end{proof}

  \begin{definition}
    A \textbf{feasible schedule $S(G)$} of a task dependency graph $G$ respect
    to the precedence constraints defined in $G$ and the specified
    scheduling requirement, e.g., being
    global/semi-partitioned/partitioned and preemptive/non-preemptive.
    $L(S(G))$ is the makespan of $S(G)$.
  \end{definition}

  With the above definitions, we can recap the objectives of the two
  steps in the dependency graph approach. In the first step, we would
  like to construct a dependency graph $G$ to minimize $len(G)$, and
  in the second step, we would like to construct a schedule $S(G)$ to
  minimize $L(S(G))$.

  We conclude this section by stating the following theorem:
\begin{theorem}
    \label{theorem:lower-bound}
    The optimal makespan of the \tsocs\ makespan problem 
	for $\textbf{T}$ on $M$
    processors is at least
    \begin{equation}\label{eq:lb}
    \max\left\{\sum_{\tau_i \in \textbf{T}} \frac{C_{i,1}+A_{i,1}+C_{i,2}}{M},
    len(G^*)\right\}
    \end{equation}
    where $G^*$ is a dependency task graph of $\textbf{T}$ that has
    the minimum critical path length.
  \end{theorem}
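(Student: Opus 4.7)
My plan is to prove that each of the two quantities inside the $\max$ in Eq.~\eqref{eq:lb} is individually a lower bound on the optimal makespan, which immediately implies that their maximum is also a lower bound. The two arguments are essentially independent and rely on different structural properties of the problem.

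First I would handle the work-based bound $\sum_{\tau_i \in \textbf{T}} (C_{i,1}+A_{i,1}+C_{i,2})/M$. The idea is the standard area argument: the total processing volume that must be executed across all jobs is $W = \sum_{\tau_i \in \textbf{T}} (C_{i,1}+A_{i,1}+C_{i,2})$, and by the definition of a schedule $\rho$ in Section~\ref{sec:schedule-strategies}, at any instant $t$ at most $M$ processors can be executing a subjob. Hence the total busy time summed across processors is at most $M \cdot C_{\max}$, where $C_{\max}$ is the makespan. Integrating the indicator $\rho(t,m)\neq\bot$ over $t$ and $m$ gives $W \le M \cdot C_{\max}$, so $C_{\max} \ge W/M$. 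This argument is insensitive to the scheduling paradigm (partitioned, semi-partitioned, or global; preemptive or non-preemptive), since in all of them intra-task parallelism is forbidden and each processor executes at most one subjob at a time.

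Second, the critical-path bound $len(G^*)$ is already handed to us by Lemma~\ref{lemma:critical-path-lb}: any feasible schedule must respect the precedence relations induced by the binary semaphores and by the intra-job ordering $C_{i,1} \prec A_{i,1} \prec C_{i,2}$, and $G^*$ was defined as the dependency graph over all task sets consistent with these constraints that minimizes the critical-path length. Any feasible schedule corresponds to some dependency graph $G$ (obtained by recording, for each semaphore, the total order in which its critical sections are executed), and its makespan is at least $len(G) \ge len(G^*)$ since $G^*$ is optimal for this objective.

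Combining the two lower bounds completes the proof: the optimal makespan is at least each of them, and therefore at least their maximum. I do not anticipate a hard step here; the only subtlety worth stating carefully is why the second bound holds under every scheduling paradigm, and this follows because the semaphore-induced serialization of critical sections is enforced by the system model regardless of preemption or migration, so every feasible schedule induces \emph{some} total order on critical sections sharing a semaphore and hence \emph{some} dependency graph $G$ with $len(G) \ge len(G^*)$.
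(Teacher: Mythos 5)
Your proposal is correct and follows essentially the same route as the paper, which also derives the work bound $\sum_{\tau_i}(C_{i,1}+A_{i,1}+C_{i,2})/M$ from a pigeon-hole (area) argument and the bound $len(G^*)$ from Lemma~\ref{lemma:critical-path-lb}. Your added observation that every feasible schedule induces a total order on each semaphore's critical sections, and hence some dependency graph $G$ with $len(G)\geq len(G^*)$, is a useful explicit justification of a step the paper leaves implicit.
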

  \begin{proof}
    The lower bound $len(G^*)$ comes from
    Lemma~\ref{lemma:critical-path-lb} and the lower bound
    $\sum_{\tau_i \in \textbf{T}} \frac{C_{i,1}+A_{i,1}+C_{i,2}}{M}$ is due to the
    pigeon hole principle.
  \end{proof}

\section{Computational Complexity and Lower Bounds}

This section presents the computational complexity and lower bounds of
approximation ratios of the dependency graph approach.

\subsection{Computational Complexity}

The following theorem shows that constructing $G^*$ is unfortunately
${\cal NP}$-hard in the strong sense.
  \begin{theorem}
    \label{theorem:critical-path-np-hard}
    Constructing a dependency task graph $G^*$ that has the minimum
    critical path length is ${\cal NP}$-hard in the strong sense.
  \end{theorem}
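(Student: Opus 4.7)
The plan is to prove strong NP-hardness by reduction from the classical single-machine scheduling problem with release times and delivery times, usually written $1|r_j,q_j|C_{\max}$ (equivalently $1|r_j|L_{\max}$), which is strongly NP-hard by Lenstra, Rinnooy Kan, and Brucker. It is enough to handle the special case $z=1$, i.e., a single binary semaphore shared by all $N$ tasks, because then the dependency graph is completely determined by a single total order $\pi$ of the critical sections, and the problem reduces to choosing $\pi$ so as to minimize $len(G)$.

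Given an arbitrary instance of $1|r_j,q_j|C_{\max}$ with $n$ jobs having release times $r_j$, processing times $p_j$, and delivery times $q_j$, I construct an instance of \tsocs\ with one semaphore by setting $C_{j,1}=r_j$, $A_{j,1}=p_j$, and $C_{j,2}=q_j$ for every $j$. The reduction is polynomial and preserves all numerical values (which is exactly what is needed for strong NP-hardness). Membership in NP is immediate, since a certificate is just the tuple of $z$ orderings of the semaphores, and $len(G)$ can be computed in polynomial time by longest-path evaluation on a DAG.

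The technical step is to show that for any ordering $\pi$, the critical path length of the resulting graph $G_\pi$ equals the makespan of the corresponding single-machine schedule that processes the jobs in order $\pi$. On one hand, unrolling the greedy completion-time recursion of the single machine yields
\begin{equation*}
C_{\max}(\pi)=\max_{\pi(i)\leq \pi(j)}\Bigl(r_i + \sum_{k:\,\pi(i)\leq \pi(k)\leq \pi(j)} p_k + q_j\Bigr).
\end{equation*}
On the other hand, every directed path in $G_\pi$ of nonzero length must start at some $C_{i,1}$, enter the $A$-chain at $A_{i,1}$, traverse a contiguous subchain up to some $A_{j,1}$ with $\pi(i)\leq \pi(j)$, and end at $C_{j,2}$; taking the maximum over $i,j$ gives exactly the same expression. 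Hence $len(G_\pi)=C_{\max}(\pi)$, and an optimal order for one problem is optimal for the other.

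The main obstacle, such as it is, is this structural lemma identifying longest paths in $G_\pi$ with greedy single-machine completion times: one must verify that the longest path never re-enters the $A$-chain, that it is always a contiguous subchain, and that delay propagation in the single-machine schedule mirrors path accumulation in the DAG. Once this equivalence is in hand, strong NP-hardness of $1|r_j,q_j|C_{\max}$ transfers verbatim to the construction of $G^*$, completing the proof.
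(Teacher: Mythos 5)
Your proposal is correct and follows essentially the same route as the paper: a reduction from the strongly NP-hard problem $1|r_j|L_{\max}$ with the identification $C_{j,1}=r_j$, $A_{j,1}=p_j$, and $C_{j,2}$ encoding the deadline/delivery time, together with the structural observation that the critical path length of the graph induced by an ordering $\pi$ equals the objective of the single-machine schedule that processes the jobs in order $\pi$. The only cosmetic difference is that you work directly with the delivery-time form $C_{j,2}=q_j$ and the closed-form expression for $C_{\max}(\pi)$, whereas the paper uses deadlines via $C_{j,2}=H-d_j$ and the equivalent recursion $f_i=\max\{f_{i-1},C_{i,1}\}+A_{i,1}$.
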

  \begin{proof}
    This theorem is proved by a reduction from the decision version of
    the scheduling problem $1|r_j| L_{\max}$, i.e., uniprocessor
    non-preemptive scheduling, in which the objective is to minimize
    the maximum lateness 
    assuming that each job $J_j$ in the given job set
    $\textbf{J}$ has 
    its known processing time $p_j \geq 0$, arrival time
    $r_j \geq 0$, and absolute deadline $d_j$. This problem is ${\cal
      NP}$-hard in the strong sense by a reduction from the
    3-Partition problem~\cite{LenstraRKBr77}.  Suppose that the
    decision version of the scheduling problem $1|r_j|
    L_{\max}$ is to validate whether there exists a schedule in which
    the finishing time of each job $J_j$ is no less than $d_j$.

    Let $H$ be any positive integer greater than $\max_{j \in \textbf{J}}
    d_j$.  For each job $J_j$ in $\textbf{J}$, we construct a task
    $\tau_j$ with one critical section, where $C_{j,1}$ is set to
    $r_j$, $C_{j,2}$ is set to $H - d_j$, and $A_{j,1} $ is set to
    $p_j$. By the setting, $C_{j,1} \geq 0, C_{j,2} \geq 0$, and
    $A_{j,1} \geq 0$ for every constructed task $\tau_j$.  The
    critical sections of all the constructed tasks are guarded by
    \emph{only one} binary semaphore. Let the task set constructed
    above be $\textbf{T}$. The above input task set $\textbf{T}$ by
    definition is a feasible input task set for the
    one-critical-section task synchronization problem.

    We now prove that there is a non-preemptive uniprocessor schedule
    for $\textbf{J}$ in which all the jobs can meet their deadlines if
    and only if there is a dependency task graph $G^*$ with a critical
    path length less than or equal to $H$ for the constructed task set
    $\textbf{T}$.

    \textbf{If part}, i.e., $len(G^*) \leq H$ holds: Without loss of
    generality, we index the tasks in $\textbf{T}$ so that the
    critical section of $A_{i,1}$ is the immediate predecessor of the
    critical section $A_{i+1,1}$ in $G^*$, e.g., as in
    Fig.~\ref{fig:example-dependency-graph}.  Suppose that
    $G^*(\tau_i)$ is the subgraph of $G^*$ that consists of only the
    vertices representing  $\set{C_{k,1}, A_{k,1},
      C_{k,2}}{k=1,2,\ldots,i-1}\cup \setof{C_{i,1}, A_{i,1}}$ and the
    corresponding edges. Let $f_i$ be the longest path in
    $G^*(\tau_i)$ that \emph{ends at the vertex representing 
      $A_{i,1}$}.

    By definition, $f_1$ is $C_{1,1} + A_{1,1}$. Moreover, $f_i$ is
    $\max\{f_{i-1}, C_{i,1}\} + A_{i,1}$ for every task $\tau_i$ in
    $\textbf{T}$.  Since $len(G^*) \leq H$ and $C_{i,2} = H - d_i$, we
    know that $f_i + C_{i,2} \leq H \Rightarrow f_i \leq d_i$ for
    every task $\tau_i$ in $\textbf{T}$. 

    We can now construct the uniprocessor non-preemptive schedule for
    $\textbf{J}$ by following the same execution order. Here, we index
    the jobs in $\textbf{J}$ corresponding to $\textbf{T}$. The
    finishing time of job $J_1$ is $r_1 + p_1 = C_{1,1} + A_{1,1} =
    f_1$. The finishing time of job $J_i$ is $\max\{f_{i-1}, r_i\} +
    p_i = \max\{f_{i-1}, C_{i,1}\} + A_{i,1} = f_i$.  

    
    This proves the if part.

    \textbf{Only-If part}, i.e., there is a uniprocessor non-preemptive
    schedule in which all the deadlines of the jobs in $\textbf{J}$
    are met: The proof for the if part can be reverted and the same
    arguments can be applied. Due to space limitation, we omit the details.
  \end{proof}

  \begin{theorem}
    \label{theorem:makespan-np-hard}
    The makespan problem with task synchronization for $\textbf{T}$ on
    $M$ processors is ${\cal NP}$-hard in the strong sense even if $M$
    is sufficiently large under any scheduling paradigm. 
  \end{theorem}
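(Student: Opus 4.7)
The plan is to reuse the reduction from $1|r_j|L_{\max}$ constructed in the proof of Theorem~\ref{theorem:critical-path-np-hard} and to argue that, whenever the number of processors $M$ is large enough (concretely, $M \geq N$), the optimal makespan of the resulting \tsocs{} instance coincides exactly with $len(G^*)$. Since computing $len(G^*)$ is strongly ${\cal NP}$-hard by Theorem~\ref{theorem:critical-path-np-hard}, this directly transfers strong ${\cal NP}$-hardness to the \tsocs{} makespan problem, independently of $M$ and of the scheduling paradigm.

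The central observation is that in the constructed instance every task shares the \emph{same} single binary semaphore, so the only binding inter-task constraint is the serialization of the critical sections. With $M \geq N$, I would pin each task $\tau_i$ to its own processor, start $C_{i,1}$ at time $0$ there, serialize the critical sections $A_{i,1}$ in the order dictated by some candidate dependency graph $G$ at the earliest admissible instants, and then run $C_{i,2}$ immediately after $A_{i,1}$ on the same processor. This gives a partitioned non-preemptive schedule whose makespan equals $len(G)$. Minimizing over $G$ yields $\mathrm{OPT} \leq len(G^*)$, and the matching lower bound of Lemma~\ref{lemma:critical-path-lb} gives $\mathrm{OPT} = len(G^*)$.

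Because the schedule exhibited above is simultaneously partitioned and non-preemptive, it is automatically feasible under partitioned, semi-partitioned, and global paradigms, and under both preemptive and non-preemptive models; conversely, the lower bound of Lemma~\ref{lemma:critical-path-lb} uses only mutual exclusion of the semaphore, which is a semantic correctness requirement rather than a scheduling-paradigm artifact. Hence deciding whether the optimal makespan is at most $H$ reduces in polynomial time to deciding whether $len(G^*) \leq H$ for every paradigm considered.

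The most delicate step will be confirming the equality $\mathrm{OPT} = len(G^*)$ uniformly across paradigms: I must rule out that preemption, migration, or fractional assignment of the $C_{i,1}$ and $C_{i,2}$ sections could shorten the makespan below $len(G^*)$. This is not really a difficulty once Lemma~\ref{lemma:critical-path-lb} is in hand, since any schedule that respects the semaphore induces a total order on the critical sections and hence a dependency graph of critical path length at least $len(G^*)$; the pigeonhole term in Theorem~\ref{theorem:lower-bound} is irrelevant here because $M$ is chosen large. With this verification, Theorem~\ref{theorem:critical-path-np-hard} delivers the strong ${\cal NP}$-hardness claim immediately.
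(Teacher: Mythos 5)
Your proposal is correct and follows essentially the same route as the paper: reduce via Theorem~\ref{theorem:critical-path-np-hard}, take $M$ large enough that a concrete schedule realizes makespan $len(G)$ for any candidate dependency graph $G$, and combine with the lower bound of Lemma~\ref{lemma:critical-path-lb} to get $\mathrm{OPT}=len(G^*)$ under every paradigm. The only cosmetic difference is the witnessing schedule: the paper places each task's critical section on a dedicated extra processor (using $M\geq N+1$), whereas you keep each task entirely on its own processor (using $M\geq N$), which is the same construction as Lemma~\ref{lemma:list-no-approximation}.
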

  \begin{proof}
    This comes directly from
    Theorem~\ref{theorem:critical-path-np-hard}. Consider that there
    are $M \geq |\textbf{T}|+1$ processors. The if-and-only-if proof
    in Theorem~\ref{theorem:critical-path-np-hard} can be extended by
    introducing a concrete schedule that executes the two non-critical
    sections of task $\tau_i$ one processor $i$ and the critical
    section of task $\tau_i$ on processor
    $|\textbf{T}|+1$.\footnote{The same statement also holds for using
      $M =|\textbf{T}|$ processors, but the proof is more involved. }
  \end{proof}

  Theorem~\ref{theorem:makespan-np-hard} expresses the fundamental
  difficulty of the multiprocessor synchronization problem and shows
  that a very simplified version of this problem is ${\mathcal
    NP}$-hard in the strong sense regardless of the number of
  processors and the underlying scheduling paradigm.  Therefore, the
  allowance of preemption or migration does not reduce the
  computational complexity. The fundamental problem is the sequencing
  of the critical sections, which is independent from the underlying
  scheduling paradigm. Therefore, no matter what flexibility the
  scheduling algorithm has (unless aborting and restarting a critical
  section is allowed), the computational complexity remains 
  ${\mathcal NP}$-hard in the strong sense.

\subsection{Remarks: Bin Packing}

Although the focus of this paper is the makespan problem in
Definition~\ref{def:problem-def} and the schedulability problem in
Definition~\ref{def:problem-decision-version-def}, we also state the
following theorems to explain the difficulty of the bin packing
problem in Definition~\ref{def:problem-bin-packing-def}.

\begin{theorem}
    \label{theorem:bin-np-hard}
    Minimizing the number of processors for a given common deadline of
    $\textbf{T}$ with task synchronization for $\textbf{T}$ (i.e.,
    Definition~\ref{def:problem-bin-packing-def}) is ${\cal NP}$-hard
    in the strong sense under any scheduling paradigm.
  \end{theorem}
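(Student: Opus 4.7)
The plan is to derive Theorem~\ref{theorem:bin-np-hard} as a direct corollary of Theorem~\ref{theorem:makespan-np-hard} by exploiting the standard equivalence between the decision versions of the bin packing and makespan problems. Specifically, I would observe that the decision version of the bin packing problem in Definition~\ref{def:problem-bin-packing-def} (``is there a schedule using at most $M$ processors that meets the common deadline $D$?'') is literally the schedulability problem in Definition~\ref{def:problem-decision-version-def}, and hence is also the decision version of the makespan problem in Definition~\ref{def:problem-def} (``is the optimal makespan on $M$ processors at most $D$?''). Theorem~\ref{theorem:makespan-np-hard} established that this decision question is ${\cal NP}$-hard in the strong sense under any scheduling paradigm, even when $M$ is sufficiently large.

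Next, I would argue the standard reduction: any polynomial-time algorithm for the bin packing optimization problem could be used to decide schedulability on an arbitrary instance $(\textbf{T}, D, M)$ by computing the minimum number $M^*$ of processors needed to meet deadline $D$ and answering ``yes'' if and only if $M^* \leq M$. Hence strong ${\cal NP}$-hardness of schedulability transfers verbatim to the bin packing problem.

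Finally, I would verify that both qualifiers in the theorem statement survive the argument. The ``strong sense'' qualifier carries over because the reduction in the proof of Theorem~\ref{theorem:critical-path-np-hard}, lifted to multiprocessor instances in Theorem~\ref{theorem:makespan-np-hard}, already uses only integer parameters polynomially bounded in the input size of the source $1|r_j|L_{\max}$ instance; no numbers are blown up. The ``any scheduling paradigm'' clause inherits for exactly the reason noted after Theorem~\ref{theorem:makespan-np-hard}: the intractability stems from sequencing the critical sections guarded by a single binary semaphore, which is independent of whether preemption, migration, partitioning, or semi-partitioning is permitted. I do not expect any genuine obstacle in the proof beyond carefully articulating these equivalences; the ``work'' has already been done in proving Theorem~\ref{theorem:makespan-np-hard}.
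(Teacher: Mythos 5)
Your proposal is correct and follows essentially the same route as the paper: the paper's proof simply notes that the decision version of the bin packing problem is the schedulability problem of Definition~\ref{def:problem-decision-version-def} and invokes Theorem~\ref{theorem:makespan-np-hard}. Your additional remarks on the standard optimization-to-decision reduction and on preserving the ``strong sense'' and ``any scheduling paradigm'' qualifiers are sound elaborations of the same argument.
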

  \begin{proof}
    As the decision problem is
    Definition~\ref{def:problem-decision-version-def}, we reach the
    conclusion based on Theorem~\ref{theorem:makespan-np-hard}.
  \end{proof}

\begin{theorem}
    \label{theorem:bin-np-no-approximation}
    There is no polynomial-time (approximation) algorithm to minimize
    the number of processors for a given common deadline of
    $\textbf{T}$ with task synchronization for $\textbf{T}$ under any
    scheduling paradigm unless ${\cal P}={\cal NP}$.
  \end{theorem}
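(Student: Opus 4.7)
The plan is to leverage the gap already built into the reduction used for Theorem~\ref{theorem:critical-path-np-hard}: it produces instances in which YES inputs admit a schedule on some finite number of processors, while NO inputs admit no feasible schedule on any number of processors whatsoever. Any polynomial-time algorithm for the bin packing version, whether exact or approximate, must be able to tell these two cases apart, so its existence would solve an NP-hard problem.

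First, I would reuse the construction from the proof of Theorem~\ref{theorem:critical-path-np-hard}: given an instance of $1|r_j|L_{\max}$, build the task set $\textbf{T}$ by setting $C_{j,1}=r_j$, $A_{j,1}=p_j$, $C_{j,2}=H-d_j$, with all critical sections guarded by a single binary semaphore. By that theorem, the original instance admits a uniprocessor schedule meeting every deadline if and only if $len(G^*)\le H$. Next, I would invoke Lemma~\ref{lemma:critical-path-lb}, which says that $len(G^*)$ lower bounds the makespan on any number of processors. Combined with the reduction, this gives the crucial dichotomy: for a YES instance of $1|r_j|L_{\max}$, the constructed $\textbf{T}$ has $len(G^*)\le H$, and the concrete $(|\textbf{T}|+1)$-processor schedule from the proof of Theorem~\ref{theorem:makespan-np-hard} witnesses feasibility; for a NO instance, $len(G^*)>H$, so \emph{no} schedule on any number of processors can meet the common deadline $H$, i.e., the bin packing instance is infeasible.

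Finally, I would close with the standard reduction argument. Suppose a polynomial-time approximation algorithm $\mathcal{A}$ with any finite ratio $\alpha$ existed for the bin packing problem. Given the constructed $(\textbf{T},H)$, run $\mathcal{A}$: if the original instance is YES, then $\mathcal{A}$ must return a finite number of processors together with a valid schedule, since $M^*\le|\textbf{T}|+1$; if the original is NO, then $\mathcal{A}$ cannot possibly output any valid schedule meeting deadline $H$, because Lemma~\ref{lemma:critical-path-lb} forbids one, so $\mathcal{A}$ must report infeasibility. Observing $\mathcal{A}$'s output therefore decides $1|r_j|L_{\max}$ in polynomial time, which forces $\mathcal{P}=\mathcal{NP}$. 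The only delicate point, which is essentially a definitional remark rather than a real obstacle, is arguing that any sensible approximation algorithm is required to produce a verifiable schedule on feasible inputs and to report infeasibility otherwise; once that is in place, the infinite gap in $M^*$ between YES and NO instances makes the inapproximability conclusion immediate, independent of the underlying scheduling paradigm.
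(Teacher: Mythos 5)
Your proposal is correct and takes essentially the same route as the paper: the paper's own proof is a two-sentence appeal to Theorems~\ref{theorem:critical-path-np-hard} and~\ref{theorem:makespan-np-hard}, asserting that any polynomial-time (approximation) algorithm for the bin packing version would decide $1|r_j|L_{\max}$. You have merely made explicit the gap argument the paper leaves implicit---YES instances are feasible on $|\textbf{T}|+1$ processors while NO instances are infeasible on \emph{any} number of processors by Lemma~\ref{lemma:critical-path-lb}---which is precisely the mechanism that citation relies on.
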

  \begin{proof}
    This is based on
    Theorems~\ref{theorem:critical-path-np-hard}~and~\ref{theorem:makespan-np-hard}. If
    such a polynomial-time algorithm exists, then the problem
    $1|r_j|L_{\max}$ can be solved in polynomial time, which implies
    ${\cal P}={\cal NP}$.
  \end{proof}

\subsection{Lower Bounds}

The dependency graph approach requires two steps. The following
theorem shows that even if both steps are optimized, the resulting
schedule for the makespan problem with task synchronization is not
optimal and has an asymptotic lower bound $2$ of the approximation
ratio.
  \begin{theorem}
    \label{theorem:non-optimal-approach}
    The optimal schedule on $M$ identical processors for the dependency
    graph $G^*$ that has the minimum critical path length is not
    optimal for the \tsocs\ makespan problem 
     and can have an approximation bound of at least
    \begin{itemize} 
    \item $2-\frac{2}{M}+\frac{1}{M^2}$ under any scheduling paradigm,
      and
    \item $2-\frac{1}{M}$ under partitioned or semi-partitioned scheduling.
    \end{itemize}
  \end{theorem}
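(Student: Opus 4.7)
The plan is to exhibit two explicit families of worst-case \tsocs\ instances, parameterized by $M$, each of which witnesses one of the two claimed lower bounds. In both families the task set will be designed so that the permutation of critical sections that minimizes $len(G)$ is forced to a specific ``bad'' choice, and once $G^{*}$ is fixed, even the optimal schedule of $G^{*}$ on $M$ processors is stuck with a long tail of sequential critical work, while an alternative ordering---whose critical path is strictly larger---admits a schedule whose length is close to the true optimum.

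For the semi-partitioned/partitioned bound $2-\frac{1}{M}$, I would use one ``heavy'' critical task together with $M-1$ light helpers dimensioned so that the unique permutation minimizing $len(G^{*})$ forces a partitioned/semi-partitioned schedule in which $M-1$ processors sit idle while the heavy critical section executes, producing a makespan of roughly $2M-1$ base units; a witness schedule using the reverse permutation packs the helpers in parallel with the non-critical portion of the heavy task, reaching a makespan of roughly $M$ base units, for a ratio of $\frac{2M-1}{M}=2-\frac{1}{M}$. For the global bound $2-\frac{2}{M}+\frac{1}{M^{2}}$ the same idea is refined: preemption and migration absorb part of the imbalance above, so the instance is tightened (e.g.\ by splitting the heavy critical section into slightly smaller pieces plus a short coupling fragment that migration cannot overlap) so that the residual gap computes out exactly to the quoted rational function. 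In both cases Theorem~\ref{theorem:lower-bound} provides the easy direction of the lower bound on $OPT$ via the pigeonhole and critical-path arguments.

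The main obstacle, and the step I would spend the most effort on, is verifying that the adversarial ordering really is the \emph{unique} minimizer of $len(G)$ across all admissible total orderings of the semaphore chain---otherwise an equally short $G$ might permit a better schedule and the lower bound would collapse. I plan to handle this by a short case analysis enumerating the candidate permutations and comparing, path by path, the sums of subjob lengths along the longest path in each candidate DAG, showing that every deviation from the adversarial permutation strictly increases $len(G)$. Once this rigidity of $G^{*}$ is secured, the remaining work---computing $L(S^{*}(G^{*}))$ by a straightforward packing of the fixed chain on $M$ processors, and exhibiting a witness schedule that achieves the claimed $OPT$ using a different ordering---is routine, and taking the ratio gives the two stated bounds.
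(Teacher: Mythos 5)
Your overall strategy---an adversarial instance on which the unique minimizer of $len(G)$ is a ``bad'' ordering, plus a witness schedule built from a different (longer-critical-path) ordering---is the same as the paper's, and you correctly identify that the crux is proving the rigidity of the minimizing permutation. However, your concrete construction for the $2-\frac{1}{M}$ bound cannot work: with one heavy task and $M-1$ helpers you have $N=M$ tasks on $M$ processors, and in that regime the approximation ratio of scheduling $G^{*}$ is exactly $1$. Indeed, by Lemma~\ref{lemma:list-no-approximation} any dependency graph $G$ with $N\leq M$ admits a (partitioned) schedule of makespan $len(G)$, and by Lemma~\ref{lemma:critical-path-lb} we have $len(G^{*})\leq OPT$; hence the optimal schedule of $G^{*}$ attains $OPT$ and no gap is exhibited. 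More generally, there is an unavoidable tension your sketch does not resolve: for $\tau_1$-first to minimize $len(G)$, the tail $C_{1,2}$ must dominate each individual helper tail, but then the single long tail that you hoped would be stranded behind the heavy critical section is exactly the tail that the witness schedule must also pay, and the ratio collapses.

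The missing quantitative idea is that the stranded work must be a large \emph{aggregate volume} spread over many tasks, not one long job. The paper takes $N=M^{2}-M+1$ tasks sharing one semaphore, with $\tau_1$ having $A_{1,1}=Q-\frac{Q}{M}$ and $C_{1,2}=\frac{Q}{M}+N\delta$, and every other task having $A_{i,1}=\delta$ and $C_{i,2}=\frac{Q}{M}$. Making $C_{1,2}$ only \emph{slightly} larger than the other tails forces $\tau_1$'s critical section to be first in $G^{*}$ (the critical path is $j\delta+Q+N\delta$ when $\tau_1$ is in position $j$, strictly increasing in $j$), yet traps total second-non-critical-section volume $\approx(M-1+\frac{1}{M})Q$ behind time $Q-\frac{Q}{M}$; a volume/pigeonhole argument then gives $\bigl(2-\frac{2}{M}+\frac{1}{M^{2}}\bigr)Q$ for any paradigm and $\bigl(2-\frac{1}{M}\bigr)Q$ for (semi-)partitioned, while the reversed ordering admits a partitioned schedule of makespan $\approx Q$ because $\tau_1$'s long critical section and tail hide on one processor while the other $M-1$ processors absorb the $M(M-1)$ small tails. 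Your proposal also leaves the $2-\frac{2}{M}+\frac{1}{M^{2}}$ case entirely unspecified (``tightened so that the residual gap computes out''), whereas in the paper both bounds fall out of the single instance above with two different lower-bounding arguments. As written, the proposal has a genuine gap at its central step.
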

  \begin{proof}
    We prove this theorem by providing a concrete input instance as follows:
    \begin{itemize}
    \item Suppose that $M$ is a given integer with $M \geq 2$ and we
      have $N = M^2-M+1$ tasks.
    \item We assume a small positive number $\delta$ which is close to
      $0$ and a number $Q$ which is much greater than $\delta$, i.e.,
      $\frac{Q}{MN}\gg \delta > 0$.
    \item All $N$ tasks have a critical section guarded by the same
      binary semaphore.
    \item Task $\tau_1$ has $C_{1,1} = \delta, A_{1,1}=Q-\frac{Q}{M}$, and
      $C_{1,2} = \frac{Q}{M} +N\delta$
    \item Task $\tau_i$ has $C_{i,1} = \delta, A_{i,1}=\delta$, and
      $C_{i,2} = \frac{Q}{M}$ for \mbox{$i=2,3,\ldots,N$.}
    \end{itemize}
    We need to show that the optimal dependency graph of this input
    instance in fact leads to the specified bound. The proof
    is in Appendix.
  \end{proof}

\section{Algorithms to Construct $G$}
\label{sec:heuristic-derive-G}

The key to success is to find $G^*$.  Unfortunately, as shown in
Theorem~\ref{theorem:critical-path-np-hard}, finding $G^*$ is ${\cal
  NP}$-hard in the strong sense. However, finding good approximations
is possible. The problem to construct $G$ is called the
\emph{dependency-graph construction problem}.  Here, instead of
presenting new algorithms to find good approximations of $G^*$, we
explain how to use the existing algorithms of the scheduling problem
$1|r_j|L_{\max}$ to derive good approximations of $G^*$.

It should be first noted that the problem $1|r_j|L_{\max}$ cannot be
approximated with a bounded approximation ratio because the optimal
schedule may have no lateness at all and any approximation leads to an
unbounded approximation ratio. However, a variant of this problem can
be easily approximated. This is known as the \emph{delivery-time}
model of the problem $1|r_j|L_{\max}$. In this model, each job $J_j$
has its release time $r_j$, processing time $p_j$, and delivery time
$q_j \geq 0$. After a job finishes its execution on a machine, its
result (final product) needs $q_j$ amount of time to be delivered to
the customer. The objective is to minimize the makespan
$K$. Therefore, the \emph{effective} deadline $d_j$ of job $J_j$ on
the given single machine is $d_j = K-q_j$. Since $K$ is a constant,
this is effectively equivalent to the case when $d_j$ is set to
$-q_j$. 

The delivery-time model of the problem $1|r_j|L_{\max}$ can then be
effectively approximated. Moreover, our problem to construct a good
dependency graph for $\textbf{T}$ is indeed equivalent to the
delivery-time model of the problem $1|r_j|L_{\max}$. To show such
equivalence, Algorithm~\ref{alg:graph-construction} presents the
detailed transformation. For each semaphore $s_k$, suppose that
$\textbf{T}_k$ is the set of tasks that use $s_k$ (Line 1 in
Algorithm~\ref{alg:graph-construction}). For each task set
$\textbf{T}_k$, we transform the problem
to construct $G_k$ to an equivalent delivery-time model of the problem
$1|r_j|L_{\max}$ (Line 3 to Line 8). Then, we construct the graph
$G_k$ based on the derived schedule of an approximation algorithm for
the delivery-time model of the problem $1|r_j|L_{\max}$. 

\begin{theorem}
  \label{thm:equivalent-delivery-time+graph}
  An $\alpha$-approximation algorithm for the delivery-time model of
  the problem $1|r_j|L_{\max}$ applied in
  Algorithm~\ref{alg:graph-construction} guarantees to derive a
  dependency graph $G$ with $len(G) \leq \alpha \times len(G^*)$.
\end{theorem}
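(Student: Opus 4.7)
\begin{proofAppendix}{Theorem~\ref{thm:equivalent-delivery-time+graph} (Proof Sketch)}
The plan is to establish a tight correspondence, for each semaphore $s_k$ separately, between the critical path length of the connected subgraph $G_k$ and the makespan achieved in the delivery-time version of $1|r_j|L_{\max}$ on an associated single-machine instance, and then to aggregate over the $z$ semaphores.

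First, I would fix a semaphore $s_k$ and the corresponding task set $\textbf{T}_k$, and define the single-machine instance $\textbf{J}_k$ used in Algorithm~\ref{alg:graph-construction}: for each $\tau_j\in\textbf{T}_k$ create a job $J_j$ with release time $r_j=C_{j,1}$, processing time $p_j=A_{j,1}$, and delivery time $q_j=C_{j,2}$. The key structural observation is that any total order $\pi$ of $\textbf{T}_k$ fully determines the chain of critical-section vertices in $G_k$, and conversely any non-preemptive single-machine schedule of $\textbf{J}_k$ induces such a total order. Given $\pi$, the longest path in $G_k$ that ends at the vertex $C_{j,2}$ can be computed by the recurrence $f_{\pi(j)} = \max\{f_{\pi(j)-1},\, C_{j,1}\} + A_{j,1}$ with $f_0 = 0$, and the critical path length contributed by $\tau_j$ is $f_{\pi(j)} + C_{j,2}$. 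This is precisely the expression for the completion-plus-delivery time of $J_j$ under the single-machine schedule that processes jobs in order $\pi$ respecting their release times, i.e.\ the quantity whose maximum is the makespan $K_\pi$ of the delivery-time model. Hence $len(G_k)=K_\pi$, and optimizing $\pi$ on either side yields the same value, so $len(G_k^*) = K_k^*$, where $K_k^*$ is the optimal makespan of the delivery-time instance $\textbf{J}_k$.

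Next, I would invoke the $\alpha$-approximation algorithm for the delivery-time model of $1|r_j|L_{\max}$ on each $\textbf{J}_k$, obtaining a schedule with makespan $K_k \leq \alpha\, K_k^*$. Building $G_k$ from this schedule as in Algorithm~\ref{alg:graph-construction} therefore gives $len(G_k) \leq \alpha\, len(G_k^*)$. Since each task has exactly one critical section guarded by exactly one semaphore, the subgraphs $G_1,\dots,G_z$ are vertex-disjoint (apart from sharing no critical-section vertices) and any path in $G$ lives inside a single $G_k$; consequently $len(G) = \max_{k} len(G_k)$ and, for the same reason applied to an optimal graph, $len(G^*) = \max_{k} len(G_k^*)$. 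Combining these,
\begin{equation*}
len(G) \;=\; \max_{k} len(G_k) \;\leq\; \alpha \max_{k} len(G_k^*) \;=\; \alpha\, len(G^*),
\end{equation*}
which is the claimed bound.

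The main obstacle I expect is the first step: rigorously showing $len(G_k)=K_\pi$ for every ordering $\pi$. The subtlety is that a path in $G_k$ can enter the critical-section chain at any $A_{j,1}$ via the edge from $C_{j,1}$, so one must check that the $\max$ in the recurrence for $f_{\pi(j)}$ exactly mirrors the release-time constraint $\max\{\text{previous completion},\, r_j\}$ in the single-machine model, and that the exiting edge to $C_{j,2}$ mirrors the delivery time $q_j$. Once this identification is made precise, the rest is a clean aggregation argument and the theorem follows.
\end{proofAppendix}
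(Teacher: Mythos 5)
Your proposal is correct and follows essentially the same route as the paper: decompose by semaphore, identify the critical path length of each $G_k$ with the delivery-completion makespan of the canonical single-machine schedule induced by the ordering (thereby sidestepping the issue of schedules with artificial idle time), and aggregate via $len(G)=\max_k len(G_k)$. The recurrence $f_{\pi(j)}=\max\{f_{\pi(j)-1},C_{j,1}\}+A_{j,1}$ you use to make the correspondence precise is exactly the one the paper employs in its reduction argument, so no further comparison is needed.
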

\begin{proof}
  This theorem can be proved by a counterpart of the proof of
  Theorem~\ref{theorem:critical-path-np-hard}. We will show that
  Algorithm~\ref{alg:graph-construction} is in fact an L-reduction
  (i.e., a reduction that preserves the approximation ratio) from the
  input task set to the delivery-time model of the problem
  $1|r_j|L_{\max}$. In this L-reduction, there is no loss of the
  approximation ratio.

  First, by definition, two tasks are independent if they do not share
  any semaphore. Moreover, since the \tsocs{} problem assumes that a
  task accesses at most one binary semaphore, a task $\tau_i$ can only
  appear at most in one $\textbf{T}_k$ for a certain $k$. Therefore,
  $len(G^*) = \max_{k=1,2,\ldots,z} len(G^*_k)$.

  To show that the reduction preserves the approximation ratio, we
  only need to prove the one-to-one mapping. One possibility is to
  prove that a schedule for the input instance of the problem
  $1|r_j|L_{\max}$ delivers the last result at time $X$ if and only if
  the corresponding graph $G_k$ constructed by using Lines 9 and 10 in
  Algorithm~\ref{alg:graph-construction} has a critical path length
  $X$. This is unfortunately not possible because a (\emph{technically
    bad but possible}) schedule for the input instance of the problem
  $1|r_j|L_{\max}$ can be arbitrarily alerted by inserting useless
  delays.

  Fortunately, for a given permutation to order the $|\textbf{T}_k|$
  tasks in $\textbf{T}_k$, we can always construct a schedule for the
  input instance of the problem $1|r_j|L_{\max}$ by respecting the
  given order and their release times. Such a schedule for the input
  instance of the problem $1|r_j|L_{\max}$ delivers the last result at
  time $X$ if and only if the corresponding graph $G_k$ constructed by
  using Lines 9 and 10 in Algorithm~\ref{alg:graph-construction} has a
  critical path length $X$. Moreover, the schedule for one such
  permutation is optimal for the input instance of the problem
  $1|r_j|L_{\max}$.

  Therefore, the approximation ratio is perserved while constructing
  $G_k$. According to the above discussions, $len(G_k) \leq \alpha\times
  len(G_k^*)$. Moreover, 
  \begin{align*}
len(G) &\leq \;\;\;\;\;\;\max_{k=1,2,\ldots,z} len(G_k)\\
  &\leq \alpha \times \max_{k=1,2,\ldots,z} len(G_k^*) = \alpha\times len(G^*)    
  \end{align*}
\end{proof}

\begin{algorithm}[t]
  \caption{Graph Construction Algorithm}
  \label{alg:graph-construction}
  \begin{algorithmic}[1]
\footnotesize
    \INPUT set ${\bf T}$ of $N$ tasks with $z$ shared binary semaphores;

    \STATE $\textbf{T}_k \leftarrow
    \set{\tau_i}{\sigma(\tau_{i,1})=s_k}$ for $k=1,2,\ldots,z$;

    \FOR {$k \leftarrow 1$ to $z$}

    \STATE $\textbf{J} \leftarrow \emptyset$;

    \FOR {each $\tau_i \in \textbf{T}_k$}

    \STATE create a job $J_i$ with $r_i\leftarrow C_{i,1}$,
    $p_i\leftarrow A_{i,1}$, and $q_i \leftarrow
    C_{i,2}$, where $q_i$ is the delivery time;

    \STATE $\textbf{J} \leftarrow \textbf{J} \cup \setof{J_i}$;

    \ENDFOR

    \STATE apply an approximation algorithm to derive a non-preemptive
    schedule $\rho_k$ for the delivery-time model of the problem
    $1|r_j|L_{\max}$ on one machine;

    \STATE construct the initial dependency graph $G_k$ for
    $\textbf{T}_k$, in which the following directed edges $(C_{i,1},
    A_{i,1})$ and $(A_{i,1}, C_{i,2})$ for every task $\tau_i \in
    \textbf{T}_k$ are created;

    \STATE create a directed edge from $A_{i,1}$ to $A_{j,1}$ in $G_k$
    if job $J_j$ is executed right after (but not necessarily
    consecutively) job $J_i$ in the schedule $\rho_k$;

    \ENDFOR

    \STATE return $G=G_1 \cup G_2 \cup \ldots \cup G_z$;
  \end{algorithmic}
 \end{algorithm}

 According to Theorem~\ref{thm:equivalent-delivery-time+graph} and
 Algorithm~\ref{alg:graph-construction}, we can simply apply the
 existing algorithms of the scheduling problem $1|r_j|L_{\max}$ in the
 delivery-time model to derive $G^*$ by using well-studied
 branch-and-bound methods, see for
 example~\cite{CARLIER198242,doi:10.1287/opre.23.3.475,NOWICKI1986266},
 or good approximations of $G^*$, see for
 example~\cite{DBLP:journals/mor/HallS92,doi:10.1287/opre.28.6.1436}.
 Here, we will summarize several polynomial-time approximation algorithms. The details
 can be found in~\cite{DBLP:journals/mor/HallS92}.
 
 For the delivery-time model of the scheduling problem
 $1|r_j|L_{\max}$, the \textbf{extended Jackson's rule}  (\textbf{JKS}) is as follows:
 ``Whenever the machine is free and one or more jobs is available for
 processing, schedule an available job with largest delivery time,'' as
 explained in~\cite{DBLP:journals/mor/HallS92}.
 
\begin{lemma}
  \label{lemma:uniprocessor-EDD}
 The extended Jackson's rule (\textbf{JKS}) is a polynomial-time
  $2$-approximation algorithm for the dependency-graph construction
  problem.
\end{lemma}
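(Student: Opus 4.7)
My plan is to apply Theorem~\ref{thm:equivalent-delivery-time+graph} and reduce the lemma to showing that the extended Jackson's rule is a polynomial-time $2$-approximation for the delivery-time model of $1|r_j|L_{\max}$ on a single machine. For each semaphore class $\textbf{T}_k$, JKS admits a straightforward priority-queue implementation running in $O(|\textbf{T}_k|\log|\textbf{T}_k|)$, so iterating over the $z$ semaphore classes in Algorithm~\ref{alg:graph-construction} yields overall polynomial running time; the nontrivial content is therefore the approximation ratio on one class.

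\textbf{Approximation argument.} I would proceed by the standard block/critical-job analysis. Let $K$ be the makespan of the JKS schedule for one class, and let $J_c$ be the job that realises it, so $K = C_c + q_c$. I would then isolate the maximal continuous busy interval $[s,C_c]$ of the machine ending at $C_c$; since the machine was idle just before $s$, every job processed inside this block has release time at least $s$. The proof splits on whether every in-block job has delivery time $\geq q_c$. In that case, applying the generic lower bound
\[
K^*\ \geq\ \min_{j\in S} r_j\ +\ \sum_{j\in S} p_j\ +\ \min_{j\in S} q_j
\]
to the set $S$ of all jobs in the block yields $K^*\geq s+\sum_{j\in S}p_j+q_c = K$, so JKS is in fact optimal. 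Otherwise, I would pick the \emph{last} in-block job $J_a$ with $q_a<q_c$ and observe, by the JKS greedy rule at the instant $J_a$ starts, that every job processed strictly after $J_a$ in the block had release time later than $s_a$ (otherwise JKS would have preferred such a job, whose delivery exceeds $q_a$). Applying the same lower bound to the ``good'' suffix $S'$ of jobs after $J_a$ gives $K^*\geq s_a+\sum_{j\in S'}p_j+q_c$, whereas $K = s_a + p_a + \sum_{j\in S'}p_j + q_c$, so $K\leq K^*+p_a$. Using $K^*\geq p_a$ then yields $K\leq 2K^*$, and Theorem~\ref{thm:equivalent-delivery-time+graph} promotes this into the claimed bound $len(G)\leq 2\,len(G^*)$ for the dependency-graph construction problem.

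\textbf{Main obstacle.} The delicate step is the second case: correctly identifying $J_a$ and certifying that every job scheduled after it in the block was actually unavailable when JKS committed to $J_a$. This structural claim is what allows the lower bound to be tightened to a set whose minimum delivery time is $q_c$, and it is precisely what converts $p_a$ into the sole slack term between $K$ and $K^*$. Once it is in hand, the arithmetic (including the trivial inequality $K^*\geq p_a$) is routine, and the aggregation across semaphore classes needs no additional work because Theorem~\ref{thm:equivalent-delivery-time+graph} already reduces $len(G)$ to $\max_k len(G_k)$.
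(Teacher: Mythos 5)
Your proposal is correct and follows essentially the same route as the paper: both reduce the claim to the $2$-approximation of the extended Jackson's rule for the delivery-time model of $1|r_j|L_{\max}$ via Theorem~\ref{thm:equivalent-delivery-time+graph}, the paper simply citing Kise et al.\ for that single-machine bound where you reprove it. Your block/interference-job argument (isolating the last job $J_a$ with $q_a<q_c$, certifying that the suffix jobs were unreleased at $s_a$, and concluding $K\leq K^*+p_a\leq 2K^*$) is the standard and correct proof of the cited result, so nothing is missing.
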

\begin{proof}
  This is based on Theorem~\ref{thm:equivalent-delivery-time+graph}
  and the approximation ratio of \textbf{JKS} for the problem
  $1|r_j|L_{\max}$, where the proof can be found
  in~\cite{Kise1979KJ00001202213}.
\end{proof}

Potts~\cite{doi:10.1287/opre.28.6.1436} observed some nice properties
when the extended Jackson's rule is applied. Suppose that the last
delivery is due to a job $J_c$. Let $J_a$ be the earliest scheduled
job so that the machine in the problem $1|r_j|L_{\max}$ is not idle
between the processing of $J_a$ and $J_c$. The sequence of the jobs
that are executed sequentially from $J_a, \ldots,$ to $J_c$ is called
a \emph{critical sequence}. By the definition of $J_a$, all jobs in
the critical sequence must be released no earlier than the release
time $r_a$ of job $J_a$. If the delivery time of any job in the
critical sequence is not shorter than the delivery time $q_c$ of
$J_c$, then it can be proved that the extended Jackson's rule is
optimal for the problem $1|r_j|L_{\max}$. However, if the delivery
time $q_b$ of a job $J_b$ in the critical sequence is shorter than the
delivery time $q_c$ of $J_c$, the extended Jackson's rule may start a
non-preemptive job $J_b$ too early. Such a job $J_b$ that appears last
in the critical sequence is called the \emph{interference job} of the
critical sequence.

Potts~\cite{doi:10.1287/opre.28.6.1436} suggested to \emph{attempt at
  improving the schedule by forcing some interference job to be
  executed after the critical job $J_c$}, i.e., by delaying the
release time of $J_b$ from $r_b$ to $r_b'=r_c$. This procedure is
repeated for at most $n$ iterations and the best schedule among the
iterations is returned as the solution.
\begin{lemma}
  \label{lemma:uniprocessor-Potts}
  Potts' iterative process (\textbf{Potts})  is a polynomial-time
  $1.5$-approximation algorithm for the dependency-graph construction
  problem.
\end{lemma}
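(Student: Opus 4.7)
The plan is to mirror the structure used for Lemma~\ref{lemma:uniprocessor-EDD}: reduce to the known approximation analysis of Potts' algorithm for the delivery-time model of $1|r_j|L_{\max}$, and then lift that bound to the dependency-graph construction problem through Theorem~\ref{thm:equivalent-delivery-time+graph}. Since the theorem certifies that Algorithm~\ref{alg:graph-construction} is an L-reduction, any $\alpha$-approximation for the uniprocessor scheduling subroutine immediately yields an $\alpha$-approximation for constructing $G$ with respect to $len(G^*)$.

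First I would argue polynomial runtime. A single invocation of the extended Jackson's rule can be implemented in $O(n \log n)$ using a priority queue keyed by delivery time. After producing a schedule, both the critical job $J_c$ (the one realizing the makespan through its delivery) and the critical sequence ending at $J_c$ can be identified by a single backward scan; the interference job $J_b$ is then the last job in that sequence whose delivery time is strictly smaller than $q_c$, again found in linear time. Each iteration therefore costs $O(n \log n)$, and since Potts' procedure performs at most $n$ iterations of delaying the release time $r_b \leftarrow r_c$ and re-running \textbf{JKS}, the total cost is $O(n^2 \log n)$.

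Next I would invoke Potts'~\cite{doi:10.1287/opre.28.6.1436} analysis to obtain the $3/2$ bound for the delivery-time model. The classical argument distinguishes two cases for the best iteration: either the critical sequence contains no interference job, in which case \textbf{JKS} is optimal on this sub-instance and the bound is trivially met, or it does, and at least one of the runs inside the $n$ iterations must have already delayed $J_b$. A careful accounting of (i) the total processing time inside the critical sequence, (ii) the delivery time $q_c$, and (iii) the fact that in every feasible schedule either $J_b$ precedes $J_c$ (charging $p_b$ to the optimum) or $J_c$ precedes $J_b$ (charging $p_c + q_c$ to the optimum), yields the $3/2 \cdot OPT$ upper bound. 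Combining this with Theorem~\ref{thm:equivalent-delivery-time+graph} gives $len(G) \leq 1.5 \, len(G^*)$.

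The only genuinely nontrivial ingredient is Potts' $3/2$ analysis itself, but this is a classical result we may cite directly; within our framework the remaining work is purely bookkeeping to match Algorithm~\ref{alg:graph-construction}'s notation to that of the delivery-time model, which Theorem~\ref{thm:equivalent-delivery-time+graph} already handles. Hence no obstacle beyond faithful citation arises.
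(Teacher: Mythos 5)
Your proposal is correct and follows essentially the same route as the paper: invoke Theorem~\ref{thm:equivalent-delivery-time+graph} to transfer the known $3/2$-approximation guarantee of Potts' iterative procedure for the delivery-time model of $1|r_j|L_{\max}$ to the dependency-graph construction problem. The extra detail you supply on the per-iteration cost and the case analysis behind the $3/2$ bound is sound but not needed beyond the citation, which is all the paper itself provides.
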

\begin{proof}
 This is based on Theorem~\ref{thm:equivalent-delivery-time+graph}
  and the approximation ratio of \textbf{Potts} for the problem
  $1|r_j|L_{\max}$, where the proof can be found
  in~\cite{DBLP:journals/mor/HallS92}.
\end{proof}

Hall and Shmoys~\cite{DBLP:journals/mor/HallS92} further improved the
approximation ratio to $4/3$ by handling a special case when there are
two jobs $J_i$ and $J_h$ with $p_i > P/3$ and $p_h > P/3$ where $P$
is $\sum_{J_j} p_j$ and running Potts' algorithm for $2n$
iterations.\footnote{Hall and Shmoys~\cite{DBLP:journals/mor/HallS92}
  further use the concept of forward and inverse problems of the input
instance of $1|r_j|L_{\max}$. As they are not highly related, we omit
those details. }
\begin{lemma}
  \label{lemma:uniprocessor-HS}
  Algorithm \textbf{HS} is a polynomial-time $4/3$-approximation
  algorithm for the dependency-graph construction problem.
\end{lemma}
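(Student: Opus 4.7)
The plan is to mirror the two preceding lemmas (for \textbf{JKS} and \textbf{Potts}) and invoke Theorem~\ref{thm:equivalent-delivery-time+graph} as a black box. Since that theorem already establishes an L-reduction from the dependency-graph construction problem to the delivery-time model of $1|r_j|L_{\max}$ which preserves approximation ratios exactly, it suffices to exhibit a $4/3$-approximation algorithm for the latter problem and show that Algorithm \textbf{HS}, applied at Line~8 of Algorithm~\ref{alg:graph-construction}, is such an algorithm.

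First I would recall, rather than reprove, the uniprocessor analysis of Hall and Shmoys~\cite{DBLP:journals/mor/HallS92}: running Potts' iterative reassignment for up to $2n$ rounds on both the forward instance and its inverse (with release and delivery times swapped), and separately handling the special case in which there exist two jobs $J_i$ and $J_h$ with $p_i, p_h > P/3$ where $P = \sum_j p_j$, yields a schedule whose makespan is within a factor $4/3$ of optimum for the delivery-time model of $1|r_j|L_{\max}$. I would simply cite this result. Combining it with Theorem~\ref{thm:equivalent-delivery-time+graph} immediately gives $len(G) \leq (4/3)\, len(G^*)$ for the dependency graph returned by Algorithm~\ref{alg:graph-construction}.

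For the polynomial-time claim, I would observe that a single pass of Potts' rule is polynomial in $n = |\textbf{T}_k|$, the outer loop runs $O(n)$ times, the forward/inverse doubling multiplies the runtime by a constant, and the special case check is trivial; thus each $\textbf{T}_k$ is handled in polynomial time, and summing over the $z \leq N$ semaphores keeps the total cost polynomial in the size of $\textbf{T}$.

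I do not expect any real obstacle here. The only subtlety worth a sentence is that Theorem~\ref{thm:equivalent-delivery-time+graph} preserves the approximation ratio without slack, so none of the structural assumptions in the Hall--Shmoys analysis (e.g., the $P/3$ case distinction, or the equivalence between a permutation of jobs and a chain in $G_k$) need to be re-examined in the dependency-graph setting; they transfer verbatim through the L-reduction. Hence the proof reduces to one line: apply Theorem~\ref{thm:equivalent-delivery-time+graph} with $\alpha = 4/3$ using the Hall--Shmoys algorithm from~\cite{DBLP:journals/mor/HallS92}.
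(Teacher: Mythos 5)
Your proposal is correct and follows exactly the paper's own argument: invoke Theorem~\ref{thm:equivalent-delivery-time+graph} with $\alpha = 4/3$, citing the Hall--Shmoys analysis of the delivery-time model of $1|r_j|L_{\max}$ for that ratio. The extra remarks on polynomial running time and the structure of the Hall--Shmoys algorithm are harmless elaborations of what the paper leaves to the citation.
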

\begin{proof}
 This is based on Theorem~\ref{thm:equivalent-delivery-time+graph}
  and the approximation ratio of \textbf{HS} for the problem
  $1|r_j|L_{\max}$, where the proof can be found
  in~\cite{DBLP:journals/mor/HallS92}.
\end{proof}

The algorithm that has the best approximation ratio for the
delivery-time model of the problem $1|r_j|L_{\max}$ is a
polynomial-time approximation scheme (PTAS) developed by Hall and
Shmoys~\cite{DBLP:journals/mor/HallS92}.

\begin{lemma}
  \label{lemma:uniprocessor-PTAS}
  The dependency-graph construction problem admits a polynomial-time
  approximation scheme (PTAS), i.e., the approximation bound is
  $1+\epsilon$ under the assumption that $\frac{1}{\epsilon}$ is a
  constant for any $\epsilon >0$.
\end{lemma}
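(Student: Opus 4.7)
My plan is to follow the exact same template used for Lemmas~\ref{lemma:uniprocessor-EDD}, \ref{lemma:uniprocessor-Potts}, and \ref{lemma:uniprocessor-HS}: reduce the claim to the existing PTAS for the delivery-time model of $1|r_j|L_{\max}$ due to Hall and Shmoys~\cite{DBLP:journals/mor/HallS92}, and then pipe that approximation guarantee through Theorem~\ref{thm:equivalent-delivery-time+graph}. No new combinatorial argument is needed; the work has already been done by the L-reduction in Algorithm~\ref{alg:graph-construction}.

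More concretely, I would first recall that Theorem~\ref{thm:equivalent-delivery-time+graph} establishes that any $\alpha$-approximation algorithm for the delivery-time version of $1|r_j|L_{\max}$, when plugged into Line~8 of Algorithm~\ref{alg:graph-construction}, produces a dependency graph $G$ with $len(G) \leq \alpha \cdot len(G^*)$. Since each task appears in exactly one $\textbf{T}_k$, the per-semaphore guarantee $len(G_k) \leq \alpha \cdot len(G_k^*)$ lifts to the global bound via $len(G) \leq \max_k len(G_k) \leq \alpha \cdot \max_k len(G_k^*) = \alpha \cdot len(G^*)$.

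Next, I would cite the Hall--Shmoys PTAS for the delivery-time model of $1|r_j|L_{\max}$: for every fixed $\epsilon > 0$, there exists a polynomial-time algorithm producing a schedule whose makespan is at most $(1+\epsilon)$ times the optimum. Instantiating $\alpha = 1+\epsilon$ in Theorem~\ref{thm:equivalent-delivery-time+graph} gives $len(G) \leq (1+\epsilon) \cdot len(G^*)$, which is exactly the PTAS claim for the dependency-graph construction problem.

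The only subtlety — and it is truly a subtlety rather than an obstacle — is to confirm that the overall running time remains polynomial: the PTAS is invoked once for each of the $z \leq N$ binary semaphores, each invocation is polynomial in $|\textbf{T}_k|$ for constant $\epsilon$, and the graph assembly in Lines~9--10 is linear. Aggregating these gives a polynomial-time procedure, so the scheme is a genuine PTAS. Thus the hard work is entirely imported from~\cite{DBLP:journals/mor/HallS92}, and the proof is a one-line application of Theorem~\ref{thm:equivalent-delivery-time+graph}.
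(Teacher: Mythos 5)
Your proposal is correct and matches the paper's intended argument exactly: the paper states this lemma without an explicit proof, but the preceding three lemmas (Lemmas~\ref{lemma:uniprocessor-EDD}--\ref{lemma:uniprocessor-HS}) all follow precisely the template you use, namely plugging the known approximation guarantee for the delivery-time model of $1|r_j|L_{\max}$ (here the Hall--Shmoys PTAS with $\alpha = 1+\epsilon$) into Theorem~\ref{thm:equivalent-delivery-time+graph}. Your additional check that the per-semaphore invocations keep the overall running time polynomial is a reasonable bit of diligence that the paper leaves implicit.
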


\section{Algorithms to Schedule Dependency Graphs}
\label{sec:heuristic-upper-bounds}

This section presents our heuristic algorithms to schedule the
dependency graph $G$ derived from
Algorithm~\ref{alg:graph-construction}. We first consider the special
case when there is a sufficient number of processors, i.e., $M \geq
N$.

\begin{lemma}
  \label{lemma:list-no-approximation}
  For a task set $\textbf{T}$, to be scheduled on $M$ identical processors, the
  makespan of the schedule which executes task $\tau_i$ on only one
  processor $i$ as early as possible by respecting to the precedence
  constraints defined in a given task
  dependency graph $G$
  is $len(G)$ if $M \geq N$. By definition, the schedule is a partitioned schedule
  for the given jobs and non-preemptive with respect to the subjobs.
\end{lemma}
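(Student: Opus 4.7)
The lower bound direction is immediate from Lemma~\ref{lemma:critical-path-lb}: any feasible schedule respecting $G$ has makespan at least $len(G)$. The work is the matching upper bound, i.e., showing that the specific schedule described (one task per processor, earliest-start subject to $G$) completes by time $len(G)$.

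First I would set up notation. Since $M \geq N$, assign task $\tau_i$ to processor $i$, so the only three subjobs ever executed on processor $i$ are $C_{i,1}, A_{i,1}, C_{i,2}$. For each vertex $v$ in $G$, let $\ell(v)$ denote the length of the longest path in $G$ that ends at $v$ (including the weight of $v$ itself). Then $\max_v \ell(v) = len(G)$ by definition. The core claim I would establish is that in the earliest-start schedule, every subjob $v$ finishes exactly at time $\ell(v)$.

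I would prove this by induction on a topological order of $G$. For a source vertex $v$ (some $C_{i,1}$ with no incoming edges), $\ell(v)$ equals the weight of $v$, and $v$ starts at time $0$ on its dedicated processor $i$, so it finishes at time equal to its weight. For the inductive step, consider $v$ with predecessors $\text{pred}(v)$; by induction each $u \in \text{pred}(v)$ finishes at $\ell(u)$. The vertex $v$ can start as soon as (i) all predecessors have finished and (ii) the previous subjob on its processor has finished. For condition (ii): on processor $i$, the previous subjob of $\tau_i$ (if any) is itself a predecessor of $v$ in $G$ (because of the edges $(C_{i,1},A_{i,1})$ and $(A_{i,1},C_{i,2})$), so (ii) is subsumed by (i). Hence $v$ starts at time $\max_{u \in \text{pred}(v)} \ell(u)$ and finishes at $\max_{u \in \text{pred}(v)} \ell(u) + \text{weight}(v) = \ell(v)$. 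No resource conflict ever delays $v$, because the only other subjobs on processor $i$ are those of $\tau_i$ itself, which are totally ordered by $G$-edges.

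Taking the maximum over all $v$ gives a makespan of $\max_v \ell(v) = len(G)$, matching the lower bound. Finally I would verify the two structural claims in the lemma's last sentence: the schedule is partitioned because every subjob of $\tau_i$ runs on processor $i$, and it is non-preemptive at the subjob level because on processor $i$ the three subjobs are executed in the fixed order $C_{i,1}, A_{i,1}, C_{i,2}$ with at most idle gaps between them. I do not anticipate a significant obstacle here; the only subtle point is observing that the on-processor ordering constraint is already implied by the edges of $G$, so dedicating a processor per task truly removes all resource contention.
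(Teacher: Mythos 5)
Your proof is correct and follows essentially the same route as the paper's: the paper's (much terser) argument likewise observes that with one task per processor each first non-critical section starts at time $0$ and each critical section finishes exactly at the length of the longest path in $G$ ending at its vertex, which is precisely your claim that every subjob $v$ finishes at $\ell(v)$. Your explicit induction and the observation that the on-processor ordering is already encoded by the edges $(C_{i,1},A_{i,1})$ and $(A_{i,1},C_{i,2})$ simply fill in details the paper leaves implicit.
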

\begin{proof}
 Since $M\geq N$, all the tasks can start their first non-critical
  sections at time $0$. Therefore, the critical section of task
  $\tau_i$ arrives exactly at time $C_{i,1}$. Then, the finishing time
  of the critical section of task $\tau_i$ is exactly the longest path
  in $G$ that finishes at the vertex representing
  $A_{i,1}$. Therefore, the makespan of such a schedule is
  exactly $len(G)$.
\end{proof}

For the remaining part of this section, we will focus on 
the other case
when $M < N$.  We will heavily utilize the concept of list schedules developed
by Graham~\cite{DBLP:journals/siamam/Graham69} and the extensions to
schedule the dependency graph $G$ derived from
Section~\ref{sec:heuristic-derive-G}. A list schedule works as
follows: Whenever a processor idles and there are subjobs eligible to
be executed (i.e., all of their predecessors in $G$ have finished),
one of the eligible subjobs is executed on the processor. When the
number of eligible subjobs is larger than the number of idle processors,
many heuristic strategies exist to decide which subjobs should be
executed with higher
priorities. Graham~\cite{DBLP:journals/siamam/Graham69} showed that the
list schedules can be generated in polynomial time and 
have a $2-\frac{1}{M}$ approximation ratio for the
scheduling problem $P|prec|C_{\max}$.

For the rest of this section, we will explain how to use or extend
list schedules to generate partitioned or semi-partitioned and
preemptive or non-preemptive schedules based on $G$.

\subsection{Semi-Partitioned Scheduling}

In a list schedule, since the subjobs of a task are scheduled
individually, a task in the generated list schedule may migrate
among different processors, thus representing a semi-partitioned
schedule. However, a subjob by default is non-preemptive in list schedules.

The following lemma is widely used in the literature for the list
schedules developed by
Graham~\cite{DBLP:journals/siamam/Graham69}. All the existing results
of federated scheduling, e.g.,
\cite{Li:ECRTS14,DBLP:conf/emsoft/Baruah15,Chen:RTS2016-Federated},
for scheduling sporadic dependent tasks (that are not due to
synchronizations) all implicitly or explicitly use the property in
this lemma. 
\begin{lemma}
  \label{lemma:list-upper}
  The makespan of a list schedule of a given task dependency graph $G$
  for task set $\textbf{T}$ on $M$ processors is at most
  $\frac{\sum_{\tau_i \in \textbf{T}} (C_{i,1}+A_{i,1}+C_{i,2}) - len(G)}{M} +  
  len(G)$.
\end{lemma}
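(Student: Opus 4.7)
The plan is to adapt Graham's classical list-scheduling bound to our setting, where the underlying DAG $G$ has vertices corresponding to subjobs and the critical-path length $len(G)$ plays the role of the longest chain. Let $L$ denote the makespan of the list schedule and set $W = \sum_{\tau_i \in \textbf{T}}(C_{i,1}+A_{i,1}+C_{i,2})$, i.e., the total subjob workload.

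First, I would partition the time interval $[0,L]$ into the set $B$ of instants where all $M$ processors are busy and the set $I$ of instants where at least one processor is idle; write $L_{busy} = |B|$ and $L_{idle} = |I|$ so that $L = L_{busy} + L_{idle}$. Summing the actual execution over processors, the total work performed is $W$. On $B$ the schedule contributes exactly $M \cdot L_{busy}$ units of work, and on $I$ it contributes at least $L_{idle}$ units of work because at every instant of $I$ at least one processor is executing some subjob. Hence
\begin{equation*}
W \;\geq\; M \cdot L_{busy} + L_{idle},
\end{equation*}
which rearranges to $L_{busy} \leq (W - L_{idle})/M$, and therefore
\begin{equation*}
L \;\leq\; \frac{W - L_{idle}}{M} + L_{idle}.
\end{equation*}

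The main obstacle, and the step that needs the most care, is the bound $L_{idle} \leq len(G)$. I would establish this by the standard backward-chaining argument on $G$: let $v^{(0)}$ be any subjob that completes at time $L$, and let $s^{(0)}$ be its start time. Because the list schedule is greedy, whenever a processor idles in $[0, s^{(0)}]$ the subjob $v^{(0)}$ cannot yet be eligible, so some predecessor of $v^{(0)}$ in $G$ must still be unfinished; moreover, at every idle instant in that interval at least one such predecessor must in fact be executing (otherwise the scheduler would have started an eligible successor or $v^{(0)}$ itself). Picking $v^{(1)}$ to be the predecessor of $v^{(0)}$ that finishes latest and iterating yields a predecessor chain $v^{(0)}, v^{(1)}, v^{(2)}, \ldots$ in $G$ whose execution intervals cover every idle moment of $[0,L]$. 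Since this chain is a path in $G$, the sum of its subjob lengths is at most $len(G)$, giving $L_{idle} \leq len(G)$.

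Combining the two inequalities, and using the fact that the function $x \mapsto \frac{W-x}{M}+x$ is non-decreasing in $x$ (as $M \geq 1$), I obtain
\begin{equation*}
L \;\leq\; \frac{W - len(G)}{M} + len(G),
\end{equation*}
which is the claimed bound. The only subtlety in the write-up is to handle the chaining argument rigorously at boundary instants where subjobs start or finish; this is managed by treating the busy/idle partition up to a measure-zero set, or equivalently by considering half-open intervals between scheduling events.
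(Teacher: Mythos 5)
Your proof is correct and is precisely the classical Graham list-scheduling argument (the busy/idle time decomposition combined with the backward predecessor chain that covers all idle time), which is exactly what the paper invokes by citing Graham's Theorem~1 without reproducing the details. The only loose phrasing is the claim that at every idle instant a predecessor of $v^{(0)}$ must be \emph{executing} --- it may merely be unfinished --- but your subsequent iteration (always stepping to the latest-finishing predecessor) repairs this, since between the completion of $v^{(j+1)}$ and the start of $v^{(j)}$ the subjob $v^{(j)}$ is eligible and so no processor can idle there.
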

\begin{proof}
  The original proof can be traced back to Theorem 1 by
  Graham~\cite{DBLP:journals/siamam/Graham69} in 1969. We omit the
  proof here as this is a standard procedure in the proof of list
  schedules for the scheduling problem $P|prec|C_{\max}$.
\end{proof}

\begin{lemma}
  \label{lemma:list-upper-approximation}
  If $len(G) \leq \alpha \times len(G^*)$ for a certain $\alpha \geq
  1$, the makespan of a list schedule of the task dependency graph $G$
  for task set $\textbf{T}$ on $M$ processors has an approximation
  bound of $1+\alpha-\frac{\alpha}{M}$ if $M < N$.
\end{lemma}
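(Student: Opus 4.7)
The plan is to combine the makespan bound of Lemma~\ref{lemma:list-upper} with the two lower bounds established in Theorem~\ref{theorem:lower-bound}. First, I would rewrite the upper bound from Lemma~\ref{lemma:list-upper} by splitting the $-len(G)/M$ term, so that it takes the form
\begin{equation*}
L(S(G)) \;\leq\; \frac{\sum_{\tau_i \in \textbf{T}}(C_{i,1}+A_{i,1}+C_{i,2})}{M} \;+\; \Bigl(1-\tfrac{1}{M}\Bigr)\,len(G).
\end{equation*}
This separation isolates two quantities, each of which can be controlled by a different lower bound on the optimal makespan.

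Next, let $C^*_{\max}$ denote the optimal makespan of the \tsocs\ makespan problem for $\textbf{T}$ on $M$ processors. From Theorem~\ref{theorem:lower-bound}, I immediately get
\begin{equation*}
\frac{\sum_{\tau_i \in \textbf{T}}(C_{i,1}+A_{i,1}+C_{i,2})}{M} \;\leq\; C^*_{\max}, \qquad len(G^*) \;\leq\; C^*_{\max}.
\end{equation*}
Combining the second inequality with the hypothesis $len(G)\leq \alpha\cdot len(G^*)$ yields $len(G)\leq \alpha\, C^*_{\max}$. Substituting both bounds into the rewritten inequality then gives
\begin{equation*}
L(S(G)) \;\leq\; C^*_{\max} + \Bigl(1-\tfrac{1}{M}\Bigr)\alpha\, C^*_{\max} \;=\; \Bigl(1+\alpha-\tfrac{\alpha}{M}\Bigr)\,C^*_{\max},
\end{equation*}
which is exactly the claimed approximation bound.

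There is essentially no technical obstacle: the proof is a straightforward algebraic composition of Lemma~\ref{lemma:list-upper} (which bundles Graham's classical list-scheduling argument) with the two-part lower bound of Theorem~\ref{theorem:lower-bound}. The assumption $M<N$ plays no active role in the derivation; it merely marks the regime in which list scheduling is the interesting tool, since Lemma~\ref{lemma:list-no-approximation} already handles $M\geq N$ with ratio equal to $\alpha$ alone. The only point worth stating carefully is that the inequality $len(G)\leq \alpha\cdot len(G^*)$ is used together with $len(G^*)\leq C^*_{\max}$ (rather than some tighter bound), so the factor $\alpha$ propagates cleanly into the coefficient $(1-1/M)$ term without interacting with the load term.
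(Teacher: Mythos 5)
Your proposal is correct and follows essentially the same route as the paper: apply Lemma~\ref{lemma:list-upper}, rewrite the bound as $\frac{\sum_{\tau_i \in \textbf{T}}(C_{i,1}+A_{i,1}+C_{i,2})}{M} + (1-\frac{1}{M})\,len(G)$, use the hypothesis $len(G)\leq \alpha\, len(G^*)$, and then bound both terms by the optimal makespan via Theorem~\ref{theorem:lower-bound}. Nothing further is needed.
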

\begin{proof}
  Since $M < N$, the makespan of a list schedule of $G$, denoted
  as $L(List(G))$, is
  \begin{align}
    & L(List(G))\nonumber\\
    \overset{\mbox{\footnotesize
        Lemma~\ref{lemma:list-upper}}}{\leq}\;\;\;&\frac{(\sum_{\tau_i
        \in \textbf{T}}
      C_{i,1}+C_{i,2}+A_{i,1}) - len(G)}{M} +  len(G)    \nonumber\\
    = \qquad&\frac{\sum_{\tau_i \in \textbf{T}} C_{i,1}+C_{i,2}+A_{i,1}}{M} +  len(G)(1-\frac{1}{M})    \nonumber\\
    \overset{\footnotesize \mbox{assumption}}{\leq}\;\; &\frac{\sum_{\tau_i \in \textbf{T}}
      C_{i,1}+C_{i,2}+A_{i,1}}{M} + \alpha \times
    len(G^*)(1-\frac{1}{M} )  \nonumber \\
    \overset{\mbox{\footnotesize
        Theorem~\ref{theorem:lower-bound}}}{\leq}\;\; &
    (1+\alpha-\frac{\alpha}{M}) OPT
  \end{align}
\end{proof}

We now conclude the approximation ratio.
\begin{theorem}
  \label{theorem:list-upper-approximation}
  When applying \textbf{JKS} ($\alpha=2$, from
  Lemma~\ref{lemma:uniprocessor-EDD}), \textbf{Potts} ($\alpha=1.5$,
  from Lemma~\ref{lemma:uniprocessor-Potts}), \textbf{HS}
  ($\alpha=4/3$, from Lemma~\ref{lemma:uniprocessor-HS}), and PTAS
  ($\alpha=\epsilon$ for any $\epsilon > 0$, from
  Lemma~\ref{lemma:uniprocessor-PTAS}) to generate the task dependency
  graph $G$, the \tsocs{} Makespan problem admits polynomial-time
  algorithms to generate a semi-partitioned schedule
  that has an approximation ratio of
  \begin{equation}
    \begin{cases}
      \alpha & \mbox{ if } M \geq N\\
      1+\alpha-\frac{\alpha}{M}  & \mbox{ if } M < N
    \end{cases}
  \end{equation}
\end{theorem}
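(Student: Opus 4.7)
The plan is to assemble the theorem by chaining three earlier results together: (i) the approximation guarantees on $len(G)$ established in Section~\ref{sec:heuristic-derive-G} via Algorithm~\ref{alg:graph-construction} together with Theorem~\ref{thm:equivalent-delivery-time+graph}, (ii) the list-schedule analysis of Lemmas~\ref{lemma:list-no-approximation} and~\ref{lemma:list-upper-approximation}, and (iii) the lower bound of Theorem~\ref{theorem:lower-bound}. Since Theorem~\ref{thm:equivalent-delivery-time+graph} already says that Algorithm~\ref{alg:graph-construction} turns any $\alpha$-approximation for the delivery-time $1|r_j|L_{\max}$ problem into a dependency graph $G$ with $len(G) \le \alpha\, len(G^*)$, it suffices to argue each scheduling case separately and then substitute the four specific values of $\alpha$.

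First I would dispose of the easy regime $M\ge N$. Lemma~\ref{lemma:list-no-approximation} shows that pinning each task $\tau_i$ to its own processor and running each subjob as early as the precedence constraints in $G$ allow yields a schedule (partitioned in tasks, non-preemptive in subjobs, hence in particular semi-partitioned) whose makespan is exactly $len(G)$. By the construction guarantee $len(G) \le \alpha\, len(G^*)$ and by Theorem~\ref{theorem:lower-bound} $len(G^*)$ lower bounds the optimal makespan; this yields the ratio $\alpha$ directly.

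Next I would handle $M<N$. Any list schedule of $G$ is, by construction, semi-partitioned (subjobs may land on different processors) and can be produced in polynomial time. Lemma~\ref{lemma:list-upper-approximation} then gives the makespan bound $1+\alpha-\alpha/M$ times the optimum. Combining both regimes, the piecewise bound in the theorem follows.

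Finally I would plug in the four values of $\alpha$: $\alpha=2$ from Lemma~\ref{lemma:uniprocessor-EDD} (\textbf{JKS}), $\alpha=3/2$ from Lemma~\ref{lemma:uniprocessor-Potts} (\textbf{Potts}), $\alpha=4/3$ from Lemma~\ref{lemma:uniprocessor-HS} (\textbf{HS}), and $\alpha=1+\epsilon$ from the PTAS of Lemma~\ref{lemma:uniprocessor-PTAS}. The only real obstacle is essentially bookkeeping: one must verify that the end-to-end procedure (Algorithm~\ref{alg:graph-construction} feeding into the list scheduler) remains polynomial in all four cases, which is immediate for \textbf{JKS}, \textbf{Potts}, and \textbf{HS}, and is polynomial in the input for any fixed $\epsilon>0$ under the PTAS. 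There is no genuinely hard step here beyond the earlier lemmas; the theorem is a direct corollary once the two regimes $M\ge N$ and $M<N$ are matched to Lemma~\ref{lemma:list-no-approximation} and Lemma~\ref{lemma:list-upper-approximation} respectively.
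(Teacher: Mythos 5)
Your proposal is correct and follows essentially the same route as the paper: the $M \ge N$ case via Lemma~\ref{lemma:list-no-approximation} (noting the resulting partitioned schedule is also semi-partitioned and that $len(G)\le\alpha\,len(G^*)\le\alpha\cdot OPT$), and the $M<N$ case via Lemma~\ref{lemma:list-upper-approximation}, then substituting the four values of $\alpha$. Your reading of the PTAS case as $\alpha=1+\epsilon$ is the intended one (the theorem statement's ``$\alpha=\epsilon$'' is a typo).
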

\begin{proof}
  The case when $M < N$ comes from
  Lemma~\ref{lemma:list-upper-approximation}. The case when $M \geq N$
  comes from Lemma~\ref{lemma:list-no-approximation} and the fact that 
  a partitioned schedule is also a semi-partitioned schedule by definition.
\end{proof}

The default list schedulers are non-preemptive in the subjob
level. However, it may be more efficient if the second non-critical
section of a task can be preempted by a critical section.
Otherwise, the processors may be busy executing second non-critical sections and a
critical section has to wait. As a result, not only this critical section itself
but also its successors in $G$ may be unnecessary postponed and therefore
increase the makespan. This problem can be handled  by preempting second
non-critical sections.
Allowing such preemption in the scheduler design can be achieved easily as follows:
\begin{itemize}
\item In the algorithm, the scheduling decision is made at a time $t$
  when there is a subjob eligible or finished.
\item Whenever a subjob representing a critical section is eligible,
  it can be assigned to a processor that executes a second
  non-critical section of a job by preempting that subjob. 
\end{itemize}

The makespan of the resulting schedule remains at most
$\frac{\sum_{\tau_i \in \textbf{T}} (C_{i,1}+A_{i,1}+C_{i,2}) -
  len(G)}{M} + len(G) $ as in Lemma~\ref{lemma:list-upper}. Therefore,
the approximation ratios in
Theorem~\ref{theorem:list-upper-approximation} still hold even if
preemption of the second non-critical sections is possible.

\subsection{Partitioned Scheduling}
\label{sec:partitioned}

In a partitioned schedule of the frame-based task set ${\bf T}$, all
subjobs of a task must be executed on the same processor. Therefore,
the list scheduling algorithm variant must ensure that once the first
subjob $C_{i,1}$ of task $\tau_i$ is executed on a processor, all
subsequent subjobs of task $\tau_i$ are tied to the
same processor in any generated list schedule.  Specifically, the
problem is termed as $P|prec, tied|C_{\max}$ in
Section~\ref{sec:scheduling-theory}.

A special case of $P|prec, tied|C_{\max}$ has been recently studied to
analyze OpenMP systems by Sun et al.~\cite{DBLP:conf/rtss/SunGWHY17}
in 2017.  They assumed that the synchronization subjob of a task
always takes place \emph{at the end of} the task.  Our dependency
graph $G$ unfortunately does not satisfy the assumption because the
synchronization subjob is in fact in the middle of a task. However, fixing this
issue is 
not difficult. We illustrate the key strategy by
using Fig.~\ref{fig:tied-example-dependency-graph}. The subgraph
$\bar{G}$ of $G$ that consists of only the vertices of the first
non-critical sections and the critical sections in fact satisfies
the assumption made by Sun et al.~\cite{DBLP:conf/rtss/SunGWHY17}.
Therefore, we can generate a multiprocessor schedule for the
dependency graph $\bar{G}$ on $M$ processors by using the BFS$^*$ algorithm (an extension of the breadth-first-scheduling algorithm) by Sun et al.~\cite{DBLP:conf/rtss/SunGWHY17}.  It can be
imagined that the subjobs that represent the second non-critical
sections $C_{i,2}$ are \emph{background} workload and can be executed
only at the end of the schedule or when the available idle time is
sufficient to complete $C_{i,2}$.

Alternatively, in order to improve the parallelism, another heuristic
algorithm can be applied where all the first non-critical sections
are scheduled before any of the critical sections using list
scheduling. Once the first non-critical section $C_{i,1}$ of task
$\tau_i$ is assigned on a processor, the remaining execution of task
$\tau_i$ is forced to be executed on that processor.

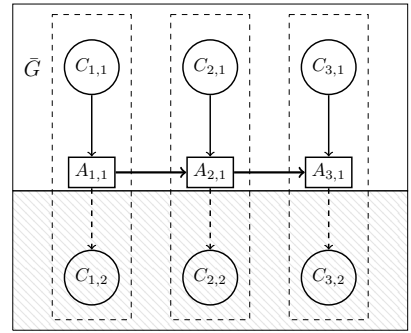
\begin{figure}[htb]
    \centering
  \scalebox{0.7}{
  \begin{tikzpicture}[x=0.75cm,auto, thick]
    
    \draw[solid, thin, pattern=north west lines, pattern color=black!10] (0,-1) -- (10,-1) -- (10,1.65) -- (0,1.65) -- (0,-1);
    \draw[solid, thin] (0, 1.65) -- (10,1.65) -- (10, 5.2) -- (0,5.2) -- (0,1.65);

    \node [](G) at (0.5, 4) {\large $\bar{G}$};

    \node [draw,circle](C11) at (2,4) {$C_{1,1}$};
    \node [draw,rectangle](A11) at (2,2) {$A_{1,1}$};
    \node [draw,circle](C12) at (2,0){$C_{1,2}$};
    
    \node [draw,circle](C21) at (5,4) {$C_{2,1}$};
    \node [draw,rectangle](A21) at (5,2) {$A_{2,1}$};
    \node [draw,circle](C22) at (5,0) {$C_{2,2}$};
    
    \node [draw,circle](C31)at(8,4){$C_{3,1}$};
    \node [draw,rectangle](A31)at(8,2){$A_{3,1}$};
    \node [draw,circle](C32)at(8,0){$C_{3,2}$};

    \draw[->] (C11) -- (A11);
    \draw[->] (C21) -- (A21);
    \draw[->] (C31) -- (A31);
    
    \draw[->,dashed] (A11) -- (C12);
    \draw[->,dashed] (A21) -- (C22);
    \draw[->,dashed] (A31) -- (C32);
  
    \draw[->,very thick] (A11) -- (A21);
    \draw[->,very thick] (A21) -- (A31);

    \draw[dashed, thin] (1,5) -- (3,5) -- (3,-0.8) -- (1,-0.8) -- (1,5);
    \draw[dashed, thin] (4,5) -- (6,5) -- (6,-0.8) -- (4,-0.8) -- (4,5);
    \draw[dashed, thin] (7,5) -- (9,5) -- (9,-0.8) -- (7,-0.8) -- (7,5);

  \end{tikzpicture}}     
    \caption{A schematic of a tied task dependency graph for a task set with
      one binary semaphore.}
    \label{fig:tied-example-dependency-graph}
  \end{figure}

If the second non-critical sections can be preempted, it can be
imagined that the subjobs that represent the second non-critical
sections $C_{i,2}$ are background workload and can be executed
whenever its processor idles and preempted by the first non-critical
sections or the critical sections on the processor. For completeness, 
we illustrate the algorithm in Algorithm~\ref{alg:tied-list-scheduling} in
the Appendix. 


\section{Timing Anomaly }
\label{sec:robustness-issues}

So far, we assume that $C_{i,1}$, $A_{i,1}$, and $C_{i,2}$ are exact
for a task $\tau_i$. However, the execution of a subjob of task
$\tau_i$ can be finished earlier than the worst case. It should be
noted that list schedules are in this case not sustainable, i.e., the reduction
of the execution time of a subjob can lead to a worse makespan due to the
well-known multiprocessor timing anomaly observed by
Graham~\cite{DBLP:journals/siamam/Graham69}. There are three ways to
handle such timing anomaly: 1) ignore the early completion and stick
to the offline schedule, 2) reclaim the unused time (slack)
carefully without creating timing anomaly, e.g., \cite{DBLP:conf/rtss/ZhuMC01}, or 3) use a
safe upper bound, e.g., Lemma~\ref{lemma:list-upper} to account for
all possible list schedules. Each of them has advantages and
disadvantages. It is up to the designers to choose whether they want
to be less effective (Option 1), pay more runtime overhead (Option
2), or be more pessimistic by taking always a safe upper bound (Option
3).

Due to multiprocessor timing anomaly, a dependency graph with a longer
critical path may have a better makespan in the resulting list
schedule.  Our approach can be easily improved by returning and
scheduling the intermediate dependency graphs in Algorithms Potts and
HS.

\section{Periodic Tasks with Different Periods}
\label{sec:periodic-tasks}

Our approach can be extended to periodic tasks with different periods
under an assumption that a binary semaphore is only shared among the
tasks that have the same period.
For each of the $z$ semaphores, a DAG is
constructed using 
Algorithm~\ref{alg:graph-construction}. Afterwards, the $z$ resulting DAGs
can be scheduled using any approach for multiprocessor DAG scheduling,
e.g., global scheduling~\cite{Lakshmanan:2010:SPR:1935940.1936239}, Federated Scheduling~\cite{Li:ECRTS14} as well as enhanced
versions like Semi-Federated
Scheduling~\cite{RTSS-2017-semi-federated} and Reservation-Based
Federated Scheduling~\cite{DBLP:journals/corr/abs-1712-05040}.

\section{Evaluations}
\label{sec:evaluations}

This section presents the evaluations of the proposed approach. We
will first explain how our approach can be
implemented by using existing routines in \litmus{} and provide
the measured overhead in \litmus{}. Then, we will demonstrate
the performance of the proposed approach by applying numerical
evaluations for different configurations.

\subsection{Implementations and Overheads}
\label{sec:implementation-litmus}

The hardware platform used in our experiments is a cache-coherent SMP,
consisting of two 64-bit Intel Xeon Processor E5-2650Lv4 running at 1.7
GHz, with 35 MB cache and 64 GB of main memory.  We have implemented
our dependency graph approach in
\litmus{},
in order to investigate the overheads. Both partitioned and
semi-partitioned scheduling algorithms presented in
Section~\ref{sec:heuristic-upper-bounds} have been implemented
in~\litmus{} under the plug-in Partitioned Fixed Priority (P-FP),
detailed in the Appendix. The patches of our implementation have been
released in~\cite{DGALITMUS}.

In Table~\ref{tab:overheads}, we report the following overheads of
different protocols, including the existing protocols
DPCP, and MPCP in \litmus{} and our implementation of 
the partitioned dependency graph
approach (PDGA) and the semi-partitioned
dependency graph approach (SDGA):
\begin{compactitem}
\item\textbf{CXS:} context-switch overhead.
\item\textbf{RELEASE:} time spent to enqueue a newly released job in a ready queue.
\item\textbf{SCHED2:} time spent to perform post context switch and management activities.
\item\textbf{SCHED:} time spent to make a scheduling decision (scheduler to find the next job).
\item\textbf{SEND-RESCHED:} inter-processor interrupt latency, including migrations.
\end{compactitem}
Table~\ref{tab:overheads} shows that the 
overheads of our approach and of other protocols implemented in \litmus{} are
comparable.

\begin{table}[]
	\scalebox{0.71}{\begin{tabular}{|c|c|c|c|c|}
		\hline
		Max.(Avg.) in $\mu s$& DPCP          & MPCP          & PDGA          & SDGA          \\ \hline
		CXS                 & 30.93 (1.51)  & 31.1 (0.67)   & 31.21 (0.71)  & 30.95 (1.54)  \\ \hline
		RELEASE             & 32.63 (3.96)  & 19.48 (3.91)  & 19.77 (4.03)  & 21.64 (4.3)   \\ \hline
		SCHED2              & 28.7 (0.18)   & 29.78 (0.15)  & 29.91 (0.16)  & 29.74 (0.2)   \\ \hline
		SCHED               & 31.43 (1.2)   & 31.38 (0.78)  & 31.4 (0.83)   & 31.26 (1.11)  \\ \hline
		SEND-RESCHED        & 47.01 (14.42) & 31.83 (3.45)  & 45.23 (4.33)  & 41.53 (7.24)  \\ \hline
	\end{tabular}}
	\caption{Overheads of different protocols in \litmus{}.}
	\label{tab:overheads}
	\vspace{-0.6cm}
\end{table}

\subsection{Numerical Performance Evaluations}
\label{sec:performance-evaluations}

We conducted evaluations with $M$ = 4, 8 and 16 processors. Depending on
$M$, we generate $1000$ task sets, each with $10M$ tasks.  For
each task set $\textbf{T}$, we generated synthetic tasks
with $\sum_{\tau_i \in \textbf{T}} C_{i,1}+C_{i,2}+A_{i,1}=M$ by
applying the RandomFixedSum method~\cite{emberson2010techniques}
and enforced that $C_{i,1}+C_{i,2}+A_{i,1} \leq 0.5$ for each task $\tau_i$.
 The
number of shared resources (binary semaphores) was set to $z \in \{4,
8, 16\}$. The length of the critical section $A_{i,1}$ 
is a fraction of the total execution
time $C_{i,1}+C_{i,2}+A_{i,1}$ of task $\tau_i$, depended on
$\beta \in \{5\%-50\%\}$.
The remaining part $C_i$ was split into $C_{i,1}$ and $C_{i,2}$ by  
drawing   $C_{i,1}$ randomly uniform from $[0, C_i]$ and setting $C_{i,2}$ 
 to $C_i-C_{i,1}$.

For a generated task set $\textbf{T}$,
we calculated a lower bound $LB$ on the optimal makespan  
based on Eq.~\eqref{eq:lb}.  Since deriving $len(G^*)$ is
computationally expensive, we used $\min_{\tau_i \in \textbf{T}}
C_{i,1} + \min_{\tau_i \in \textbf{T}} C_{i,2} + \max_{k=1,\ldots,z}
CriticalSum_k$ as a safe approximation for $len(G^*)$, where
$CriticalSum_k$ is the sum of the lengths of the critical sections
that share semaphore $s_k$.  If the relative deadline of the task set
is less than $LB$, the task set is not schedulable by any algorithm.
We compare the performance of different algorithms according to the
\emph{acceptance ratio} by setting the relative deadline $D=T$ in the
range of $[LB, 1.8LB]$.  We name the developed algorithms using the
following rules: 1) \emph{JKS/POTTS} in the first part: using the
extended Jackson's rule or Potts to construct the dependency
graph;\footnote{We did not implement
  Lemma~\ref{lemma:uniprocessor-PTAS} due to the complexity issue.
  Algorithm HS in general has similar performance to POTTS. } 2)
\emph{SP/P} in the second part: semi-partitioned or partitioned
scheduling algorithm is applied\footnote{In
  Section~\ref{sec:partitioned}, we presented two strategies for task
  partitioning: one is based on \cite{DBLP:conf/rtss/SunGWHY17}
  (detailed in Appendix) and another is a simple heuristic by
  performing the list scheduling algorithm based on the first
  non-critical sections. In all the experiments regarding partitioned
  scheduling, we observed that the latter (i.e., the simple heuristic)
  performed better. All the presented results for partitioned scheduling are
  therefore based on the simple heuristic.  }; 3) \emph{P/NP} in the
third part: preemptive or non-preemptive for the second non-critical
sections.

\begin{figure}[t]
	\includegraphics[width=1\linewidth]{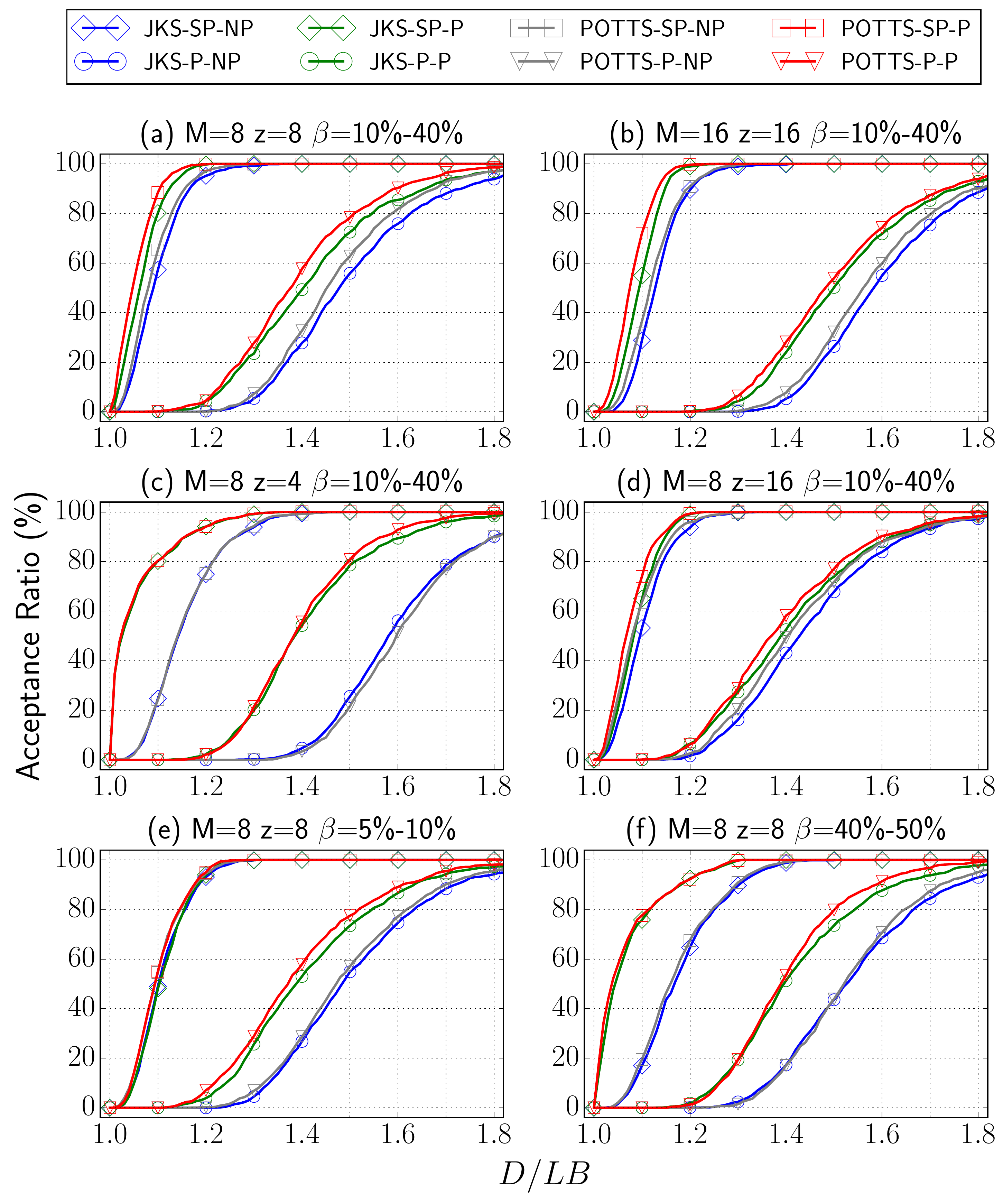}
	\vspace{-0.8cm}
	\caption{Comparison of different approaches with different deadlines.}
	\vspace{-0.3cm}
	\label{fig:ratio}
\end{figure}


We evaluated all 8 combinations under different settings as shown in
Fig.~\ref{fig:ratio}. Due to space limitation, only a subset
of the results is presented. In general, 
the semi-partitioned scheduling algorithms clearly outperform the
partitioned strategies, independently from the algorithm used to construct the
dependency graph. In addition, the preemptive scheduling policy with respect to the second computation segment is
superior to the non-preemptive strategy and POTTS (usually) performs slightly
better than JKS.
We analyze the effect of the three parameters individually by changing: 
\begin{enumerate}
  \item \textbf{$\boldsymbol{M = z \in \{8, 16\}}$}
  (Fig.~\ref{fig:ratio}(a)~and~Fig.~\ref{fig:ratio}(b)): increasing $z$ and $M$  also slightly
increases the difference between the semi-partitioned and the partitioned
approaches.
\item \textbf{$\boldsymbol{z}$ for a fixed $\boldsymbol{M}$}, i.e., $z \in \{4, 8, 16\}$ and $M = 8$ 
(Fig.~\ref{fig:ratio}(c), Fig.~\ref{fig:ratio}(a),~and
Fig.~\ref{fig:ratio}~(d)): when the number of resources is decreased compared to
the number of processors, the performance gap between preemptive and
non-preemptive scheduling increases.
\item \textbf{Workload of Shared Resources,
i.e., \\$\boldsymbol{\beta \in \{
[5\%-10\%], 
[10\%-40\%], 
[40\%-50\%]\}}$}
\\(Fig.~\ref{fig:ratio}(e),
Fig.~\ref{fig:ratio}~(a),~and~Fig.~\ref{fig:ratio}~(f)): if the workload of the
critical sections is increased, the difference between preemptive and
non-preemptive scheduling approaches is more significant.
\end{enumerate}

We also compare our approach with the Resource Oriented Partitioned (ROP)
scheduling with release enforcement by von der Br\"uggen et
al.~\cite{RTNS17-resource} which is designed to schedule periodic tasks with one
critical section on a multiprocessor platform.  The concept of the ROP is to have a
resource centric view instead of a processor centric view. The algorithm 1)
binds  the critical sections of the same resource to the same processor, thus 
enabling well known uniprocessor protocols like PCP to handle  the
synchronization,  and 2) schedule the non-critical sections on the remaining 
processors using a state-of-the-art scheduler for segmented self-suspension
tasks,  namely SEIFDA~\cite{Bruggen16RTNS}.
 Among the methods in~\cite{RTNS17-resource}, we evaluated FP-EIM-PCP 
 (under fixed-priority scheduling) and EDF-EIM-PCP (under dynamic-priority scheduling). 
 It has been shown in~\cite{RTNS17-resource} that
EDF-EIM-PCP dominates all existing methods. 
We performed another set of evaluations by adopting
aforementioned settings and testing the utilization level 
in a step of
$5\%$, where the utilization of a task set $\textbf{T}$ is
$\sum_{\tau_i\in \textbf{T}}\frac{C_{i,1}+C_{i,2}+A_{i,1}}{T_i}$.
 Fig.~\ref{fig:sched} presents the evaluation results. Due to
space limitation, only a subset of the results is presented, but the
others have very similar curve tendencies.  For readability, we only
select two combinations in our proposed approach that outperform the
others. The results in Fig.~\ref{fig:sched} show that for
frame-based tasks, our approach outperforms
ROP significantly.  We note that Fig.~\ref{fig:sched} is only for
frame-based tasks, and the results for periodic task
systems discussed in Section~\ref{sec:periodic-tasks} are further
presented in Appendix. 

\begin{figure}[t]
	\centering
	\includegraphics[height=1.5in]{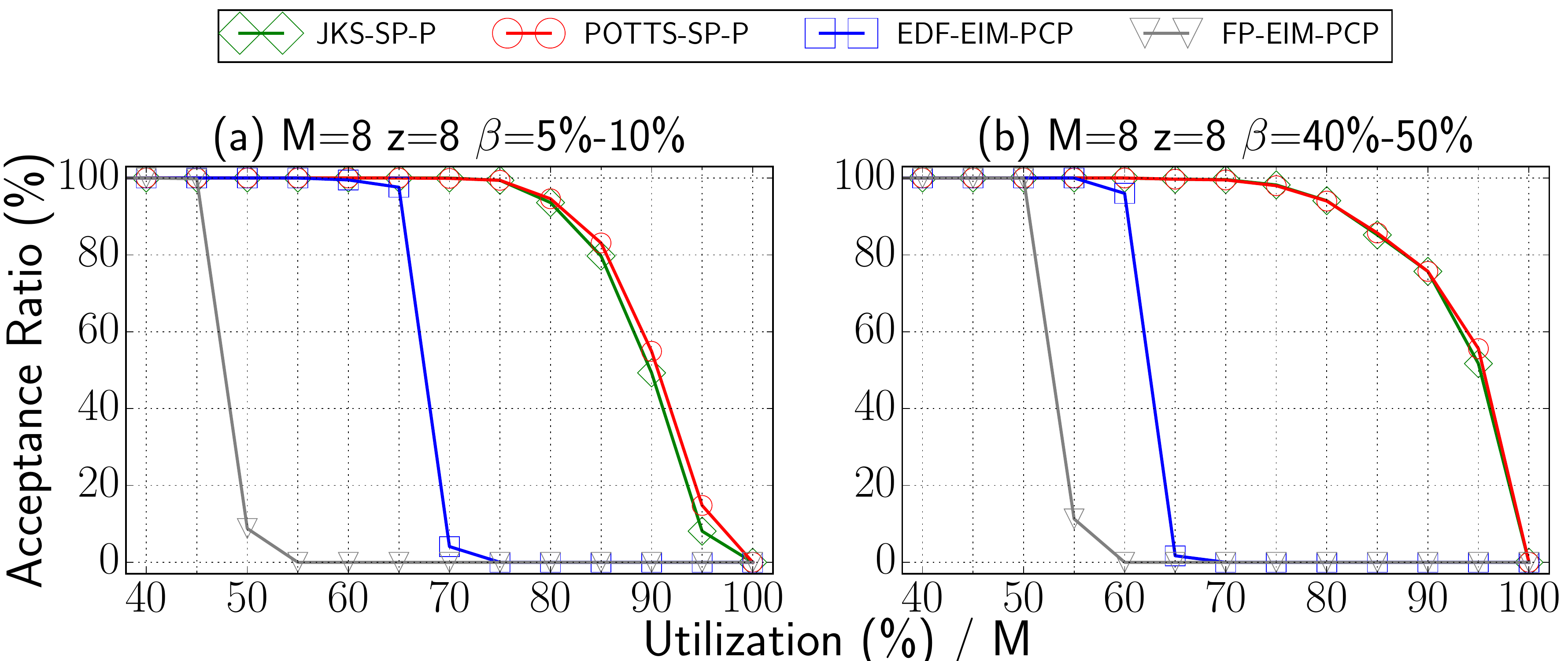}
	\vspace{-0.5cm}
	\caption{Schedulability of different approaches for
          frame-based task sets.}
	\vspace{-0.3cm}
	\label{fig:sched}
\end{figure}

\vspace{-0.15in}
\section{Conclusion} 
\label{sec:conclusion}
\vspace{-0.05in}

This paper tries to answer a few fundamental questions when real-time
tasks share resources in multiprocessor systems. Here is a short
summary of our findings:
\begin{itemize}
\item The fundamental difficulty is mainly due to the sequencing of
  the mutual exclusive accesses to the share resources (binary
  semaphores). Adding more processors, removing periodicity and job
  recurrence, introducing task migration, or allowing preemption does
  not make the problem easier from the computational
  complexity perspective.

\item The performance gap of partitioned and semi-partitioned
  scheduling in our study is mainly due to the capability to schedule
  the subjobs constrained by the dependency graph. Although
  partitioned scheduling may seem much worse than semi-partitioned
  scheduling in our evaluations, this is mainly due to the lack of
  understanding of the problem $P|prec, tied|C_{\max}$ in the
  literature. Further explorations are needed to understand these
  scheduling paradigms for a given dependency graph.

\item The dependency graph approach is not work-conserving for the
  critical sections, since a critical section may be ready but not
  executed due to the artificially introduced precedence
  constraints. Existing multiprocessor synchronization protocols
  mainly assume work-conserving for granting the accesses of the
  critical sections via priority boosting. Our study reveals a potential to consider
  cautious and non-work-conserving synchronization protocols in the
  future.
\end{itemize}

\clearpage 
\noindent{\textbf{Acknowledgement}}: {This paper is supported by DFG, as  part of the
Collaborative Research Center SFB876, project A3 and B2
{(http://sfb876.tu-dortmund.de/)}.} The authors thank Zewei Chen and Maolin Yang
for their tool SET-MRTS (Schedulability Experimental Tools for
Multiprocessors Real Time Systems, 
https://github.com/RTLAB-UESTC/SET-MRTS-public) to evaluate the LP-GFP-FMLP,
LP-PFP-DPCP, LP-PFP-MPCP, GS-MSRP, and LP-GFP-PIP in Fig.~\ref{fig:periodic-ratio}.

\section*{Appendix}
\begin{proofAppendix}{Theorem~\ref{theorem:non-optimal-approach}}
    Due to the design of the task set, there are only $N$ different
    dependency graphs, depending on the 
    position of $\tau_1$ in the execution order.  
    Suppose that the critical section of task $\tau_1$ is
    the $j$-th critical section in the dependency graph. It can be
    proved that the critical path of this dependency graph is $j\delta
    + Q + N\delta$.  
    We \emph{sketch} the proof:
    \begin{itemize}
    \item The non-critical section $C_{1,2}$ must be part of the
      critical path since $C_{1,2} = \frac{Q}{M} +N\delta$, which is
      greater than any $(N-1)A_{i,1} + C_{i,2}$ for any $i
      =2,3,\ldots, N-1$.
    \item The longest path that ends at the vertex representing
      $A_{1,1}$ has 1) one non-critical section, 2) $j-1$ critical
      sections from $\tau_i$ for $i=2,3,\ldots,N$, and 3) 1 critical
      section from task $\tau_1$. Therefore, this length is $\delta +
      (j-1)\delta +   Q - \frac{Q}{M} = j\delta + Q - \frac{Q}{M}$.
    \item Combining the two scenarios, we reach the conclusion.
    \end{itemize}

    Therefore, the dependency graph $G^*$ that has the minimum
    critical path length is the one where $\tau_1$'s critical section is
    the first one     
    among the $N$ critical sections. The optimal schedule of
    the dependency graph $G^*$ on $M$ processors has the following
    properties:
    \begin{itemize}
    \item Task $\tau_1$ finishes its critical section at time $\delta+Q-\frac{Q}{M}$.
    \item Before time $\delta+Q-\frac{Q}{M}$, none of the second
      non-critical sections is executed. Therefore, the makespan of
      any feasible schedule $S(G^*)$ of $G^*$ on $M$ processors is
      \begin{align*}
  L(S(G^*)) \geq      &~\delta+Q-\frac{Q}{M}+\sum_{i=1}^{N}\frac{C_{i,2}}{M} \\
 =&~
        \delta+Q -\frac{Q}{M} + \frac{(M^2-M+1) \frac{Q}{M} +
          N\delta}{M}\\
=&~ \left(1+\frac{N}{M}\right)\delta+\left(2- \frac{2}{M} +
\frac{1}{M^2}\right)Q
      \end{align*}
    \item Moreover, when the scheduling policy is either
      semi-partitioned or partitioned scheduling, by the pigeon hole
      principle, at least one processor must execute
      $\ceiling{\frac{N}{M}}$ of the $N$ second non-critical
      sections no earlier than $\delta+Q-\frac{Q}{M}$. Therefore, the makespan of a feasible semi-partitioned
      or partitioned schedule $S_p$ of $G^*$ on $M$ processors is
      \begin{align*}
L(S_p(G^*)) \geq        &~\delta+Q-\frac{Q}{M}+\ceiling{\frac{N}{M}}
\frac{Q}{M}\\
 =&     ~\delta+Q -\frac{Q}{M} +\ceiling{M-1+\frac{1}{M}} \frac{Q}{M}  \\
  =&    ~\delta+Q -\frac{Q}{M} +M \frac{Q}{M}  \\
= &~ \delta+\left(2- \frac{1}{M} \right)Q
      \end{align*}
    \end{itemize}

\noindent    We can have another feasible partitioned schedule $S^*$:
    \begin{itemize}
    \item The first non-critical section $\tau_1$ is executed on
      processor $M$, and the first non-critical sections of the other
      $N-1$ tasks are executed on the first $M-1$ processors based on
      list scheduling.  All the first non-critical sections finish no
      later than $M\delta$. Each of the first $M-1$ processors
      executes \emph{exactly} $M$ tasks since there are $N-1=M(M-1)$
      tasks with identical properties on these $M-1$
      processors.
   \item The critical sections of tasks $\tau_N,\tau_{N-1},
     \ldots, \tau_1$ are executed sequentially by following the above
     reversed-index order on the same processor of the corresponding first
     non-critical sections, starting from time $M\delta$.
    \item At time $M\delta+N\delta$, all the second non-critical
      sections of $\tau_2, \ldots, \tau_N$ are eligible to be
      executed. We execute them in parallel on the first $M-1$
      processors by respecting the partitioned scheduling
      strategy. That is, each of the first $M-1$ processors
      executes \emph{exactly} $M$ tasks with $C_{i,2}= Q/M$.
The makespan of these $N-1$ tasks is $(N+M)\delta +
      \frac{(N-1)\frac{Q}{M}}{M-1} = (N+M)\delta + Q$.
    \item At time $M\delta+N\delta$, the critical section of $\tau_1$
      starts its execution on processor $M$. Furthermore, at time
      \mbox{$(N+M)\delta+Q-\frac{Q}{M}$}, the second non-critical section of
      $\tau_1$ is executed on processor $M$ and it is finished at time
      \mbox{$(N+M)\delta+Q +N\delta = (2N+M)\delta + Q$}.
    \item 
    As a result, the makespan of the above partitioned schedule
      $S^*$ is \emph{exactly} $(2N+M)\delta + Q$.
    \end{itemize}

    Therefore, the approximation bound of the optimal task dependency
    graph approach is at least $\frac{L(S(G^*))}{L(S^*)}$ under any
    scheduling paradigm and is at least $\frac{L(S_p(G^*))}{L(S^*)}$
    under partitioned or semi-partitioned scheduling paradigm. We
    reach the conclusion by taking $\delta\rightarrow 0$.  
\end{proofAppendix}


\textbf{Pseudo-code of the Partitioned Preemptive Scheduling in
  Section~\ref{sec:partitioned}} 
For notational brevity, we define
two vertices $v_{i,1}$ and $v_{i,3}$ to represent the first and second
non-critical sections of task $\tau_i$ and $v_{i,2}$ to represent the
critical section of task $\tau_i$. Let $\textbf{T}_m$ be the set of
tasks in $\textbf{T}$ assigned to processor $m$ for
$m=1,2,\ldots,M$. The pseudo-code is listed in
Algorithm~\ref{alg:tied-list-scheduling}. It consists of three blocks:
initialization from Line 1 to Line 4, scheduling of the first
non-critical sections and the critical sections of the tasks according
to $\bar{G}$ from Line 5 to Line 23, and scheduling of the second
non-critical sections of the tasks from Line 24 to Line 28.

The first block is self-explained in
Algorithm~\ref{alg:tied-list-scheduling}. We will focus on the
second and third blocks of Algorithm~\ref{alg:tied-list-scheduling}.  
Our scheduling algorithm executes the first non-critical sections and
the critical sections non-preemptively. Whenever a
subjob finishes at time $t$, we examine the following scenarios on
each processor $m$ for $m=1,2,\ldots,M$:
\begin{itemize}
\item If there is a pending critical section on processor $m$ that is
  eligible at time $t$ according to the dependency graph $G$, we would
  like to execute the critical section as soon as possible. Therefore,
  this critical section is executed as soon as it is eligible and the
  processor idles (i.e., Lines 12-13).
\item Else if there is a task in $\textbf{T}_m$ in which its first
  non-critical section has not finished yet at time $t$, we would like
  to execute it (Lines 14-15).
\item Otherwise, there is no eligible subjob to be executed at time
  $t$. If there is still an unassigned task, we select one and assign
  it to processor $m$ by starting its first non-critical section at
  time $t$ (Lines 16-19).
\end{itemize}
In all the above steps, task $\tau_i$ can be arbitrarily selected if
there are multiple tasks satisfying the specified conditions. 
We note that the schedule is in fact \emph{offline}. Therefore, after
we finish the schedule of the first non-critical sections and the
critical sections, in the third block in
Algorithm~\ref{alg:tied-list-scheduling}, we can pad the idle time of
the schedule on a processor $m$ with the second non-critical sections
assigned on processor $m$, starting from time $0$. The only attention
is not to start earlier than the finishing time of its critical
section. Of course, to minimize the makespan, we should always pad the
idle time as early as possible.

\begin{algorithm}[t]
  \caption{Tied List-Scheduling (Partitioned Preemptive)}
  \label{alg:tied-list-scheduling} 
  \begin{algorithmic}[1]
  \footnotesize
  \INPUT{$G, \textbf{T}, M$ with $|\textbf{T}| > M$};

  \STATE{$current \leftarrow 0$};
  \STATE{assign \emph{one} task $\tau_i$ in $\textbf{T}$ to task set
    $\textbf{T}_m$ to be executed on processor $m$};

  \STATE{$\textbf{T} \leftarrow \textbf{T} \setminus \cup_{m=1}^{M} \textbf{T}_m$};
  \STATE{execute $v_{i,1}$ of the unique task $\tau_i$ in $\textbf{T}_m$
    on processor $m$ from time $0$,  i.e.,
    $\rho(t, m) \leftarrow \tau_i$ for $t \in [0, C_{i,1})$, for each
    $m=1,2,\ldots,M$};

  \WHILE {$\exists \tau_i$ such that $v_{i,2}$  has not finished yet at time $current$}
    \STATE {let $t$ be the minimum time instant greater than $current$
    such that the schedule finishes a subjob at time $t$};
  \STATE{$current \leftarrow t$};

    \FOR {$m=1,2,\ldots,M$}
      \IF{processor $m$ is busy executing a subjob at time $t$}
      \STATE{continue;}
      \ELSIF{processor $m$ idles (or just finishes a subjob) at time $t$}
      \IF {$\exists \tau_i \in \textbf{T}_m$, in which $v_{i,2}$ has not
      finished yet and $v_{i,2}$ is eligible
      according to $G$ at time $t$}
        \STATE{execute $\tau_i$'s critical section from time
          $t$ to $t+A_{i,1}$ non-preemptively on processor $m$, i.e., $\rho(\theta, m) \leftarrow \tau_i$ for $\theta \in [t, t+A_{i,1})$};
    \ELSIF{$\exists \tau_i \in \textbf{T}_m$, in which $v_{i,1}$ has not
      finished yet at $t$}
         \STATE{execute $v_{i,1}$ from time
          $t$ on $t+C_{i,1}$ processor $m$, i.e., $\rho(\theta, m) \leftarrow \tau_i$ for $\theta \in [t, t+C_{i,1})$};
     \ELSIF{$\textbf{T}$ is not empty}
          \STATE{select a task $\tau_i$ and remove $\tau_i$ from
            $\textbf{T}$, i.e., $\textbf{T} \leftarrow \textbf{T} \setminus
            \setof{\tau_i}$};
          \STATE{assign task $\tau_i$ to processor
            $m$, i.e., $\textbf{T}_m \leftarrow \textbf{T}_m \cup
            \setof{\tau_i}$};
          \STATE{execute $v_{i,1}$ from time $t$ to $t+C_{i,1}$ on
            processor $m$,  i.e., $\rho(\theta, m) \leftarrow \tau_i$ for $\theta \in [t, t+C_{i,1})$};
    \ENDIF
    \ENDIF
    \ENDFOR
  \ENDWHILE
   \FOR {$m=1,2,\ldots,M$}
      \FOR {each task $\tau_i$ in $\textbf{T}_m$}
      \STATE{schedule the second non-critical section $v_{i,3}$ of task $\tau_i$
        as background workload with the lowest priority preemptively
        as early as possible but no earlier than the finishing time of its critical section};
      \ENDFOR
   \ENDFOR
  \end{algorithmic}
  \end{algorithm}
  
\noindent\textbf{Implementation in \litmus}    
To force the tasks to follow the pre-defined order to execute the
critical sections, we added several elements into the $rt\_ params$
structure which is used to define the property for each
task, i.e., priority, period, execution time, etc. Two parameters are
added: 1) \emph{$rt\_order$} to define the order of the task to execute
the critical section, and 2) \emph{$rt\_total$} to define the number of
the tasks that shared the same resource.  To implement the
binary semaphores under the dependency graph approach, we created two new
structures, \emph{$pdga\_semaphore$} for the partitioned dependency graph
approach (PDGA), and \emph{$sdga\_semaphore$} for the semi-partitioned
dependency graph approach (SDGA). In these structures, one
parameter is defined to control the order of the execution named
\emph{$current\_serving\_ticket$}. When a task requests the resource,
it will compare its \emph{$rt\_order$} with the semaphore's
\emph{$current\_serving\_ticket$}, if they are equal, the task will be
granted to access the resource and start its critical section; if
not, the task will be added to the wait-queue, which is sorted by
the tasks' parameter \emph{$rt\_order$}. Once a task has finished its
critical section, it will increase the semaphore's current
serving ticket by $1$, and check the head of the wait-queue do the comparison
again. Once the \emph{$current\_serving\_ticket$} reaches to the
\emph{$rt\_total$}, which means one dependency graph has finished its
execution of the critical sections, then the parameter \emph{$current\_serving\_ticket$} will
be reset to $0$ to start the next iteration. The only difference between
PDGA and SDGA is that we added the migration function for SDGA to support
the semi-partitioned algorithm.

\noindent\textbf{Evaluations for Periodic Task Sets}   
We also performed evaluations for periodic task systems, when a binary
semaphore is only shared by the tasks with the same period described
in Section~\ref{sec:periodic-tasks}. We used
similar configurations as in Section~\ref{sec:performance-evaluations} to
generate the task sets. For the tasks that share the same semaphore,
they have the same period in the range of $[1, 10]$. The following
algorithms were evaluated:
\begin{compactitem}
	\item LP-GFP-FMLP~\cite{block-2007}: a linear-programming-based
	(LP) analysis for global FP scheduling using the FMLP~\cite{block-2007}. 
	\item LP-PFP-DPCP~\cite{bbb-2013}: LP-based analysis for
	partitioned FP and DPCP~\cite{DBLP:conf/rtss/RajkumarSL88}. Tasks are assigned
	using Worst-Fit-Decreasing (WFD) as proposed in~\cite{bbb-2013}.
	\item LP-PFP-MPCP~\cite{bbb-2013}: LP-based analysis for
	partitioned FP using MPCP~\cite{Rajkumar_1990}. Tasks are partitioned according to WFD
	as proposed in~\cite{bbb-2013}. 
	\item GS-MSRP~\cite{wieder-2013}: the Greedy Slacker (GS)
	partitioning heuristic with the spin-based locking protocol
	MSRP~\cite{DBLP:conf/rtss/GaiLN01} under Audsley's Optimal Priority Assignment~\cite{Audsley1991aOPA}.
	\item LP-GFP-PIP: LP-based global FP scheduling using the
	Priority Inheritance Protocol (PIP)~\cite{DBLP:conf/rtss/EaswaranA09}. 
	\item FP-EIM-PCP~\cite{RTNS17-resource}: The ROP scheduling under fixed-priority scheduling and release enforcement.
	\item EDF-EIM-PCP~\cite{RTNS17-resource}: The ROP scheduling under dynamic-priority scheduling and release enforcement.
	\item POTTS-SF: Our approach by applying algorithm Potts for generating $G$ and
          semi-federated scheduling
          in~\cite{RTSS-2017-semi-federated}.
	\item JKS-SF: Our approach by applying algorithm JKS for generating $G$ and
          semi-federated scheduling  in~\cite{RTSS-2017-semi-federated}. 
\end{compactitem}
For one evaluation point, $100$ synthetic task sets were generated and
tested. Only a subset of the results is presented in
Fig.~\ref{fig:periodic-ratio} due to space limitation, and
LP-PFP-MPCP is not presented for better readability since it performs
the worst for the evaluations in Fig.~\ref{fig:periodic-ratio}.  The figure
clearly shows that POTTS-SF and JKS-SF significantly outperform the
other approaches.

\begin{figure}[t]
	\centering
        \vspace{-0.25in}
	\includegraphics[width=\linewidth]{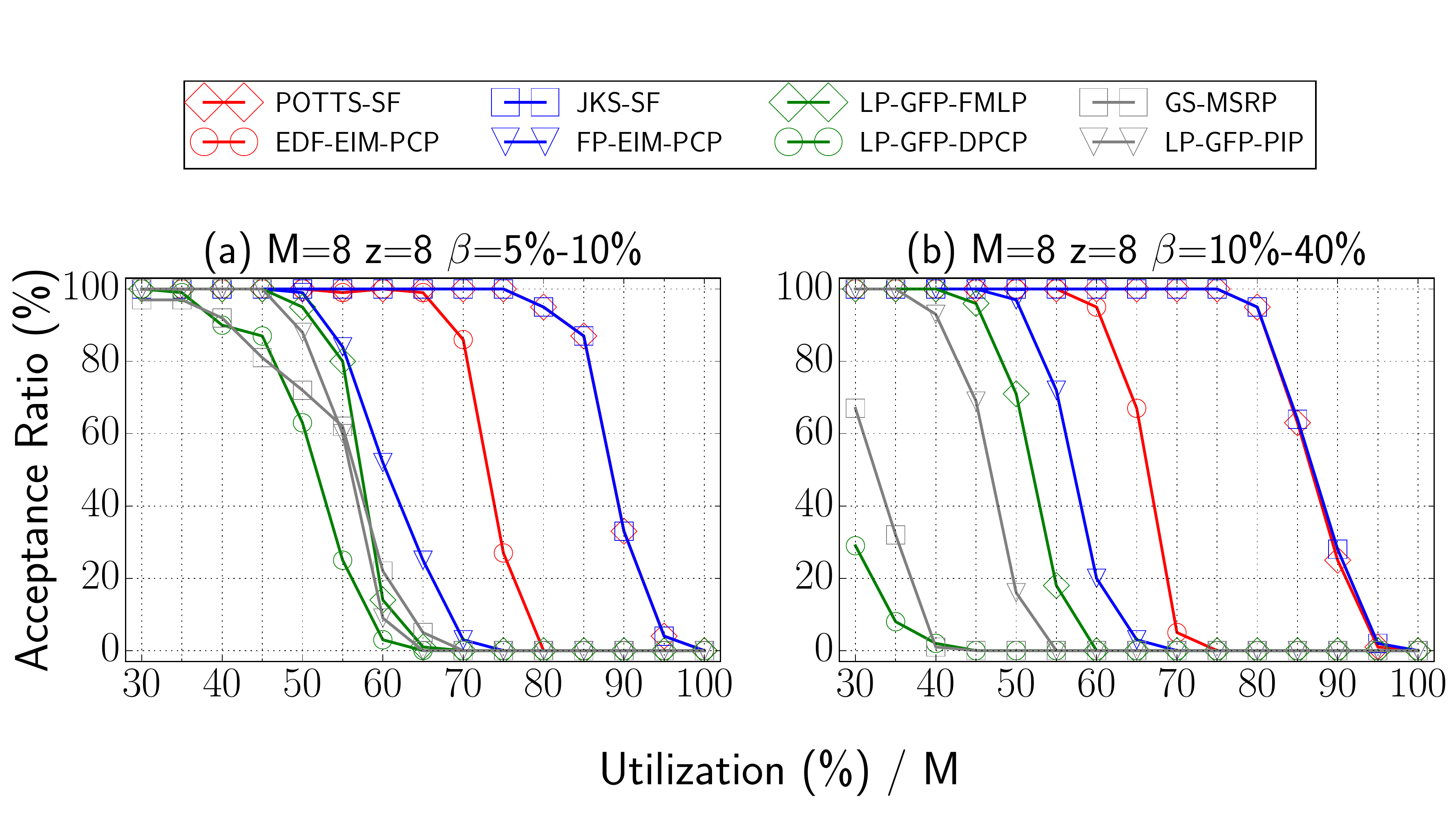}
        \vspace{-0.4in}
	\caption{Comparison of different approaches for periodic task sets. }
	\label{fig:periodic-ratio}
\end{figure}

\bibliographystyle{abbrv} \bibliography{real-time}

\end{document}